\DeclareMathOperator*{\argmax}{arg\,max}
\newtheorem{claim}{Claim}
\newtheorem{definition}{Definition}
\newtheorem{lemma}{Lemma}
\newtheorem{theorem}{Theorem}
\newtheorem{corollary}{Corollary}
\newcommand{\1}{\mathbbm{1}}
\def\ddefloop#1{\ifx\ddefloop#1\else\ddef{#1}\expandafter\ddefloop\fi}
\def\ddef#1{\expandafter\def\csname bb#1\endcsname{\ensuremath{\mathbb{#1}}}}
\def\ddef#1{\expandafter\def\csname c#1\endcsname{\ensuremath{\mathcal{#1}}}}
\def\ddef#1{\expandafter\def\csname v#1\endcsname{\ensuremath{\boldsymbol{#1}}}}
\def\ddef#1{\expandafter\def\csname v#1\endcsname{\ensuremath{\boldsymbol{\csname #1\endcsname}}}}
\newcommand{\sacomment}[1]{}
\newcommand{\sycomment}[1]{}
\newcommand{\newshipra}[1]{#1}
\newcommand{\synew}[1]{#1}
\newcommand{\syremoved}[1]{}
\newcommand{\OPT}{\text{OPT}}
\newcommand{\Regret}{\text{Regret}}
\newcommand{\Xvec}{\boldsymbol{X}}
\newcommand{\Fvec}{\boldsymbol{F}}
\newcommand{\Gvec}{\boldsymbol{G}}
\newcommand{\upperx}{\bar x}
\newcommand{\lowerx}{\underline{x}}
\title{Online Allocation with Strategic Agents and Unknown Distribution}
\title{Online Allocation and Learning in the Presence of Strategic Agents}
\author{%
  Steven Yin\\
  Department of Industrial Engineering and Operations Research\\
  Columbia University\\
  New York, NY 10027 \\
  \texttt{sy2737@columbia.edu} \\
  \And
  Shipra Agrawal \\
  Department of Industrial Engineering and Operations Research\\
  Columbia University\\
  New York, NY 10027 \\
  \texttt{sa3305@columbia.edu} \\
   \AND
   Assaf Zeevi \\
   Graduate School of Business\\
   Columbia University\\
   New York, NY 10027 \\
   \texttt{assaf@gsb.columbia.edu} \\
}
\begin{document}

\maketitle

\begin{abstract}
     We study the problem of allocating $T$ sequentially arriving items among $n$ homogeneous
  agents under the constraint that each agent must receive a pre-specified fraction of
  all items, with the objective of maximizing the agents' total valuation of
  items allocated to them. The agents' valuations for the item in each round are
  assumed to be i.i.d. but their distribution is a priori unknown to the central
  planner. Therefore, the central planner needs to implicitly learn these
  distributions from the observed values in order to pick a good allocation policy.
  However, an added challenge here is that the agents are strategic with incentives to
  misreport their valuations in order to receive better allocations. This sets 
  our work apart both from the online auction design settings which
  typically assume known valuation distributions and/or involve payments, and from the 
  online learning settings that do not consider strategic agents. To that end, our main
  contribution is an online learning based allocation mechanism
  that is approximately Bayesian incentive compatible, and when all agents are truthful, guarantees a sublinear regret for individual agents' utility compared to that under the optimal offline allocation policy.
\end{abstract}

\section{Introduction}
\syremoved{Growth in online marketplaces have driven a lot of recent academic interest in
online resource allocation. For example, online ad platforms need to
sequentially allocate impressions to one of many advertisers. Maximizing some
notion of global objective, such as social welfare or revenue, while maintaining
certain fairness or budget constraints, have become a key research problem at
the heart of many for-profit and non-profit organizations. }

A classic sequential resource allocation problem is to allocate $T$ sequentially arriving items to $n$ agents, where each agent must receive a predetermined fraction of the  items. The goal is to maximize social welfare, i.e., the agents' total valuation of the items allocated to them. This problem is non-trivial even when the agents' valuations are stochastic and i.i.d. with a known distribution, the main difficulty being that the allocations must be performed in real-time; specifically, an item must be allocated to an agent in the current round without knowledge of their future valuations. 

A more challenging (and quite useful) extension of the problem which has been the focus of recent literature considers the case where the distribution of the agents' valuations is apriori unknown to the planner. In such settings, algorithms based on online learning can be used to adaptively learn the valuation distribution from observed valuations in previous rounds, and improve the allocation policy over time (see \cite{agrawal2009dynamic, DevanurSivan2019, balseiro2020best,balseiro2021regularized} for some examples). 
However, these mechanisms implicitly assume that the agents report their valuations truthfully, so that the mechanism can directly learn from the reported valuations in order to maximize the social welfare.

Many practical resource allocations settings do not conform with the truthful reporting assumption. In particular, selfish and strategic agents may have an incentive to misreport their valuations if that can lead to individual utility gain (possibly at the expense of social welfare). 
Hence, an allocation policy that does not take such misreporting incentives into account can incur
significant loss in social welfare in presence of strategic agents. 
\syremoved{For example, consider a simple setting with two agents whose true valuations are
respectively $\{0, 1\}$ w.p. $1/2$, and $\{0,1-\epsilon\}$ w.p. $1/2$; and  each agent must be allocated exactly half of the $T$ items.  If both agents report their valuations truthfully,  then under the optimal (social welfare maximizing) allocation policy, the second agent will receive only $T/4$ (in expectation) items with value $1-\epsilon$  and the remaining $T/4$ items with value $0$. This provides an incentive for the second agent to single-handedly deviate from the truthful reporting strategy and instead report a very high valuation whenever her valuation is $1-\epsilon$. With this misreporting, the second agent would receive $T/2$ items of valuation $1-\epsilon$, and the social welfare will decrease by $T\epsilon/4$. }
\synew{For example, consider a simple setting with two agents whose true valuations are i.i.d. and uniformly distributed between 0 and 1. That is,
$$X_1, X_2 \stackrel{i.i.d.}{\sim} \text{Uniform[0,1]}$$
Each agent is pre-determined to receive an equal fraction of all the items. 
The optimal welfare maximizing allocation policy is to allocate the item to the agent with higher valuation in (almost) every round. 
This policy results in $T/3$ expected utility ($\bbE[X_1 | X_1>X_2]/2$) for each agent, and  a social welfare of $2T/3$. However, suppose that the first agent  chooses to misreport in the following way: the agent reports a high valuation of $1$ whenever her true valuation is in $[0.5,1]$
and a low valuation of $0$ whenever her true valuation is in $[0, 0.5]$. Assuming the other agent remains truthful, this will lead to the  first agent receiving all the items in her top $1/2$ quantile, and therefore a significantly increased utility of $3T/8$ ($\bbE[X_1 | X_1>0.5]/2$) compared to $T/3$ under truthful reporting. The social welfare however, goes down to $5T/8$ in this case.} 
Thus under the optimal policy, each agent has an incentive to misreport her valuations in order to gain individual utility. 
The incentives to misreport may be further amplified under an online learning based allocation algorithm that learns approximately optimal policies from the valuations observed in the previous rounds. In such settings, the agents can potentially mislead the online learning algorithm to \emph{learn a more favorable policy} over time by repeatedly misreporting their values. 

Motivated by these shortcomings, in this paper, we consider the problem of designing an online learning and allocation mechanism in the presence of \emph{strategic agents}. Specifically, we consider the problem of sequentially allocating $T$ items to $n$ strategic agents. The problem proceeds in $T$ rounds. 
In each round $t=1,\ldots, T$, 
the agents' true valuations $X_{i,t}, i=1,\ldots, n$ for the $t^{th}$ item are generated i.i.d. from a distribution $F$
\emph{a priori unknown} to the central planner.
However, the central planner can only observe a value $\tilde X_{i,t}$ reported by each agent $i$, which may or may not be the same as her true valuation $X_{i,t}$ for the item. Using the reported valuations from the current and previous rounds, the central planner needs to make an irrevocable decision of who to allocate the current item. The allocations should be made in a way such that  each agent at the end receives a fixed fraction $p^*_i$ of the $T$ items, where $p^*_i>0, 
\sum_{i=1}^n p^*_i=1$. The objective of the central planner is to maximize the total utility of the agents, where utility of each agent is defined as the sum total \emph{true} valuations of items received by the agent. 

Our main contribution is a mechanism that achieves  both: (a) \emph{Bayesian incentive compatibility}, i.e., assuming all the other agents are truthful, with high probability no single agent can gain a significant utility by deviating from the truthful reporting strategy, and; (b) \emph{near-optimal regret guarantees}, namely,  the utility of each individual agent under the online mechanism is ``close" to that achieved under the optimal offline allocation policy.    

\paragraph*{Organization}
After discussing the related literature in some further detail in Section \ref{sec: lit review}, we formally introduce the problem setting and some of the core
concepts in Section~\ref{sec: formulation}.  
Section~\ref{sec: algo and
main results} describes our online learning and allocation algorithm and provides formal statements of our main results (Theorem~\ref{thm: main BIC theorem} and Theorem~\ref{thm: individual regret}).
Section~\ref{sec: proof ideas} and Section~\ref{sec: proof outline} provide an overview of the proofs of the above theorems. All the missing details of the proofs are provided in the appendix. Finally, in Section~\ref{sec: future directions} we discuss some
limitations and future directions.


\section{Literature review}
\label{sec: lit review}
Our work lies at the intersection of online learning and mechanism design.
From an online learning perspective, our setting is closely related to the recent work on constrained online resource allocation under stochastic i.i.d. rewards/costs
(e.g., see \cite{DevanurSivan2019, agrawal2009dynamic,agrawal2014fast, balseiro2020best,balseiro2021regularized}).
However, a crucial assumption in those settings is that the central planner can observe the true rewards/costs of an allocation, which in our setting would mean that the central planner can observe agents' true valuations of the items being allocated. Our work extends these settings to allow for selfish and strategic agents who may have incentives to misreport their valuations. As discussed in the introduction, unless the online allocation mechanism design takes these incentives into account, selfish agents may significantly misreport their valuations to cause significant loss in social welfare.

Incentives and strategic agents have been previously considered in online allocation mechanism design, however, most of that work has focused on auction design where payments are used as a key mechanism for limiting rational agents' incentives to misreport their valuations. 
For example, \textcite{amin2014repeated} study a
posted-price mechanism in a repeated auction setting where buyers' valuations are context dependent. 
\textcite{golrezaei2019dynamic} extend this work to multi-buyer setting, using second-price auction with dynamic personalized
reserve prices.  \textcite{kanoria2021incentive} study a similar problem in a
non-contextual setting. 
(There is also a significant literature that studies learning
in repeated auction settings from the bidder's perspective.  Since our paper focuses on the central planner's point of view, we omit references to that literature. )
All of the above-mentioned works are concerned
with maximizing revenue for the seller, and use money/payments as a key instrument for
eliciting private information about the bidder's valuations for the items. In this paper, we are concerned with online allocation without money, and
the goal is to maximize each agent's utility.

Recently, there has been some work on studying reductions from mechanism design with money to those without money. \textcite{gorokh2021monetary} provided a black-box reduction from any
one-shot, BIC mechanism with money, to an
approximately BIC mechanism without money. 
However, their reduction relies crucially on knowing the true value distribution of agents and therefore is not applicable to our setting. 
\textcite{procaccia2013approximate} consider a specific (one-round) facility allocation problem and explicitly formulate the idea of designing
mechanisms without money to achieve approximately optimal performance against
mechanisms with money.  Subsequently, there is a series of papers that extended
the results on mechanism design without money in a single shot setting, when the
bidders' value distribution is unknown 
\cite{guo2010strategy,han2011strategy,cole2013mechanism,cole2013positive,cole2013mechanism}.
These papers either use a very restricted setting with just two agents, or use 
very specific/simple valuation functions for the agents. Even in these basic settings, they show that the best one can hope for is a constant approximation to what one can achieve with a mechanism that uses money. It is not clear what kind of regret guarantees such reductions imply in a repeated online learning setting. Therefore, we do not consider such reductions from auction mechanisms with money to be a fruitful direction for achieving our goals of both incentive compatibility and low (sublinear) regret for our online allocation problem. 


Finally, in a \emph{repeated} allocation settings with \emph{known} valuation distribution, there are more positive results for truthful mechanism design without money.
For example, \textcite{guo2009competitive} and later \textcite{balseiro2019multiagent}
studied the problem of repeatedly allocating items to agents with known value
distributions; both use a state-based ``promised utility'' framework.

To summarize, to the best of our knowledge, this is the first paper to incorporate strategic agents' incentives in the well-studied online allocation problem with stochastic i.i.d. rewards and unknown distributions. Thus, it bridges the gap between the online learning and allocation literature which focuses on non-strategic inputs, and the work on learning in repeated auctions which focuses on allocation mechanisms that  utilize money (payments) to achieve incentive compatibility.


\section{Problem formulation}
\label{sec: formulation}
\subsection*{The offline problem}
We first state the offline version of the problem which will serve as our
benchmark for the online problem. There is a set of $n$ agents, and a 
distribution $\Fvec$ over $\cX \coloneqq [0, \bar x]^n$.
\syremoved{\sacomment{Is assuming bounded values necessary? How does $\bar{x}$ appear in the theorems? Is this a common assumption?}}
Each draw
$\Xvec\sim\Fvec$ from this distribution represents the $n$ agents' valuations
of one item: $\Xvec = [X_1,\ldots,X_n]$. We assume that the agents' valuations are i.i.d., i.e.
$$\Fvec = F\otimes \ldots \otimes F.$$
A matching policy (aka allocation policy) maps, potentially with some exogenous randomness, a value vector $\Xvec$ to one of the agents $i\in \{1, \ldots, n\}$. 
Specifically, given a realized value vector
$\Xvec \in [0,\bar x]^n$, a (possibly randomized) policy $\pi$ maps $\Xvec$ to agent $\pi(\Xvec)\in \{1,\ldots, n\}$, with the probability of agent $i$ receiving an allocation given by $\bbP(\pi(\Xvec)=i)$. The offline optimization problem is to find a social welfare-maximizing policy $\pi^*$ such that each agent $i$ in expectation receives a predetermined fraction $p^*_i$ of the pool of items,  where $p_i^*>0, \sum_i p_i^*=1$. 
The problem of finding optimal policy can therefore be stated as the following
\begin{align}
    \max\limits_{\pi} \quad & \bbE\left[ \sum_{i=1}^nX_i \1(\pi(\Xvec)=i)\right]\label{prob: optimal partition problem}\\ 
    \text{ s.t.}\quad & \bbP(\pi(\Xvec)=i) = p^*_i\quad \forall i\nonumber
\end{align}
where the expectations are taken both over $\Xvec\sim \Fvec$ and any randomness in the mapping made by policy $\pi$ given $\Xvec$.
Solving the offline problem is non-trivial, as
it is an infinite dimensional optimization problem as stated in its' current
form in \eqref{prob: optimal partition problem}.
But it turns out to be closely related to Semi-Discrete Optimal Transport, and 
that the dual of \eqref{prob: optimal partition problem} can be written as
\begin{equation}
  \min_{\lambda\in\bbR^n} \cE(\lambda, \Fvec) \coloneqq \sum_{i\in[n]}\int_{\bbL_{i}(\lambda)} (X_i+\lambda_i)\,d\Fvec(\Xvec) - \lambda^\top p^*
  \label{prob: optimal transport dual}
\end{equation}
where $\bbL_{i}$ is what is sometimes referred to as the \emph{Laguerre cell}: 
\begin{equation}
\bbL_{i}(\lambda) = \left\{\Xvec: X_i + \lambda_i> X_j+\lambda_{j} \,\forall j\neq i, \right\}.
\label{eq: laguerre cell}
\end{equation}
Let $\lambda^*(\Fvec)$ denote an optimal solution to \eqref{prob: optimal
transport dual}. When it is clear from the context, we omit the distribution
$\Fvec$. It is known that an optimal solution to \eqref{prob: optimal partition
problem} is given by the following deterministic policy defined by the Laguerre cell partition (Proposition~2.1
\cite{aude2016stochastic}):
\begin{equation}
\label{eq:linearpartition}
    \pi^*(\Xvec) = i \text{ for all } \Xvec \in \bbL_i(\lambda^*), i=1,\ldots, n
\end{equation}
More generally, we will refer to any policy defined by a Laguerre cell partition as a greedy policy. 
\begin{definition}[Greedy allocation policy]
Consider any allocation policy that partitions the domain $[0,\bar x]^n$ as $\bbL_i(\lambda)$ (as defined in \eqref{eq:linearpartition}) for some $\lambda \in \bbR^n $. We refer to such a policy as the greedy allocation policy with  parameter $\lambda$.
\end{definition}
\syremoved{
\sacomment{Is such policy referred to as greedy policy in other literature or we are calling it greedy?} \sycomment{I'm calling it greedy.}
\sacomment{is $\lambda^*$ unique? should we say for some $\lambda$?}}

Note that there are efficient algorithms for solving \eqref{prob: optimal
transport dual} (see\cite{aude2016stochastic}) when the distribution $\Fvec$ is known. Therefore we will treat $\lambda^*(\cdot)$ as a black-box that can be computed efficiently for any given input distribution $\hat{\Fvec}$.

\subsection*{The online problem:  approximate Bayesian incentive compatibility and regret}
We are interested in the case when items are sequentially allocated over $T$ rounds,
and that the distribution $\Fvec$ is initially unknown. 
Specifically, in each round $t=1,\ldots, T$, 
the agents' true valuations $\Xvec_t =(X_{i,t}, i=1,\ldots, n)$ are generated i.i.d. from the distribution $\Fvec$
\emph{a priori unknown} to the central planner.
However, the central planner does not observe $\Xvec_t$ but only observes the reported valuations $\tilde \Xvec_t = (\tilde X_{i,t}, i=1,\ldots, n)$ which may or may not be the same as the true valuations.

An online allocation mechanism consists of a sequence of allocation policies $\pi_1, \ldots, \pi_t$ where the policy $\pi_t$ at time $t$ may be adaptively chosen based 
on the observed information until before time $t$:
\begin{equation}
    \cH_t = \{\tilde \Xvec_1,\ldots, \tilde \Xvec_{t-1}, \pi_1, \ldots,  \pi_{t-1}\}.
    \label{eq: history up to t}
\end{equation}
Given allocation policy $\pi_t$ at time $t$, the agent $i$'s utility at time $t$ is given by 
$$ u_i(\tilde \Xvec_t, \Xvec_t, \pi_t) = X_{i,t} \1[\pi_t(\tilde \Xvec_t) = i]$$
If an agent $i$ has already reached his target allocation of $p^*_i T$ items, then he cannot be allocated more items. Note that since the allocation policy may be randomized, for any given value vector $\Xvec$, $\pi_t(\Xvec)$ is a random variable. 
To ensure truthful reporting in presence of strategic agents, we are interested in mechanisms that are (approximately) Bayesian incentive compatible.

\begin{definition}[Approximate-BIC]
For an online allocation mechanism, let $\pi_t, t=1,\ldots, T$ be the sequence of allocations when all agents report truthfully, i.e., when $\tilde \Xvec_t = \Xvec_t, \forall t$; and let $\tilde \pi^i_t, t=1,\ldots, T$ be the sequence when all agents except $i$ report truthfully, i.e., $X_{j,t} = \tilde X_{j,t}, \forall j\ne i$.
Then the online allocation mechanism is called $(\alpha, \delta)$-approximate Bayesian Incentive Compatible if, for all $i=1,\ldots, n$, with probability at least $1-\delta$, 
$$
\sum_{t=1}^T u_i(\tilde\Xvec_t, \Xvec_t, \tilde \pi_t^i) - \sum_{t=1}^T u_i(\Xvec_t, \Xvec_t, \pi_t) \leq \alpha 
$$
Here, the probability is with respect to the randomness in true valuations $\Xvec_t\sim \Fvec$ and any randomness in the online allocation policy. For the online policy to be approximate-BIC, the statement should hold for all possible misreporting of valuations $\tilde X_{i,t} \ne X_{i,t}$.
\end{definition}
Therefore, if $\alpha$ is small, then an individual agent has little incentive to strategize. Note that approximate-BIC also implies that truthful reporting for all agents constitutes an approximate-Nash equilibrium.
\syremoved{
Now, assuming that all agents are truthful, we are also interested in the
difference between the social welfare achieved by the online allocation policy,
and the optimal welfare achieved in the offline expected problem: Let $\OPT$ be the
optimal objective of \eqref{prob: optimal partition problem}. 
That is, 
$$\OPT = \sum_{i=1}^n \mathbb{E}[u_i(\Xvec,\Xvec, \lambda^*(\Fvec))]$$
where $\lambda^*(\Fvec)$ denotes the corresponding greedy allocation policy. And, as before let $\pi_t$ be the sequence of allocation policies generated by our online allocation mechanism assuming truthful agents, then we define the total regret of the mechanism as
\begin{equation} 
  \label{eq: welfare regret}
  \Regret(T) = T\OPT - \sum_{i=1}^n \sum_{t=1}^T u_i(\Xvec_t, \Xvec_t, \pi_t)
\end{equation}}
Assuming that all agents are truthful, we are also interested in bounding each individual agent's regret.

\begin{definition}[Individual regret]
 We define an individual agent $i$'s regret under an online allocation mechanism as the difference between agent $i$'s realized utility over $T$ rounds and the expected 
utility achieved in the offline expected problem. That is,
\begin{equation} 
  \label{eq: individual regret}
  \Regret_i(T) =  T\mathbb{E}[u_i(\Xvec,\Xvec, \lambda^*)] - \sum_{t=1}^T u_i(\Xvec_t, \Xvec_t, \pi_t).
\end{equation}
Here $\pi_1,\ldots, \pi_T$ denote the allocation policies used by the online allocation mechanism in round $t=1,\ldots, T$. 
\end{definition}
Note that since social welfare is given by the sum of all agents' utilities, a bound on individual regret implies a bound on the regret in social welfare of the mechanism. 
\syremoved{Note that $\Regret(T) = \sum_i \Regret_i(T)$.
\sacomment{need a discussion on why $\OPT$ is a good benchmark}}

\section{Algorithm and main results}
\label{sec: algo and main results}
We present an online allocation mechanism that is approximately-BIC,  and further achieves low regret guarantees on individual regret when all agents are truthful.
Our algorithm contains two components: a learner, and a detector. Intuitively, the detector makes sure that the mechanism is approximately BIC, and the learner adaptively learns utility-maximizing allocation policies assuming truthful agents.

The \emph{learner}
runs in epochs with geometrically increasing lengths.
The starting time of each epoch $k$ is given by $L_k=2^k, k=0,1, \ldots$, which is also when the allocation policy is updated. At the end of each epoch (i.e., at time $t=L_k-1$ for epoch $k$), the learner takes all  the previously reported values from all the
agents, and uses them to construct an empirical distribution of the agents'
valuations. The learner implicitly assumes truthful agents in its
computations. Therefore, since the agents' true valuations are i.i.d., it first
constructs a single, one-dimensional empirical distribution $\hat F_t$, and then uses
it to construct the corresponding $n$-dimensional distribution $\hat \Fvec_t$:
\begin{equation}
    \label{eq:empirical}
\hat F_t(x) = \frac{1}{tn}\sum_{s=1}^t\sum_{i=1}^n \1[\tilde X_{i,s} \leq x]
\end{equation}
$$\hat \Fvec_t = \hat F_t\otimes \ldots \otimes \hat F_t$$
The learning algorithm then solves the offline problem \eqref{prob: optimal
transport dual} using $\hat\Fvec_t$, and uses the resulting greedy allocation policy characterized by $\lambda^*(\hat\Fvec_t)$ to allocate the items in the
following epoch.

In parallel to the learner, the \emph{detector} constructs and monitors, in each time step $t$,
and for each agent $i$, two empirical
distributions. One using the reported valuations from agent $i$: $\bar F_t$, and
one using the reported valuations from all the other agents: $\tilde F_t$. The detection
algorithm then computes the supremum between the two empirical CDFs, $\sup_x |\bar
F_t(x) - \tilde F_t(x)| $. If this difference is greater than a predetermined threshold $\Delta_t$,
then the detector raises a flag that there has been a violation of truthful reporting and the entire allocation game stops. 
Otherwise, the process continues. 

The threshold $\Delta_t$ needs to be chosen such that if everyone is truthful,
then with high probability the detection algorithm will not pull the trigger. At the same
time, if someone deviates from truthful reporting significantly, then it should detect this with
high probability. The typical concentration result used in comparing empirical
CDFs is the Dvoretzky-Kiefer-Wolfowitz (DKW) inequality\cite{dvoretzky1956asymptotic}.
However, since a strategic agent can adaptively change its'
misreporting strategy, we cannot directly apply the DKW inequality, which
assumes i.i.d. samples. Instead, we use martingale version of the DKW inequality (Lemma~\ref{lem: martingale
uniform convergence}), and use that to choose an appropriate threshold
$\Delta_t$.
The details are given in Algorithm~\ref{algo: main} and Algorithm~\ref{algo: detection}.

\SetKw{kwTerminates}{Terminates}
\SetKwInput{kwInput}{Input}
\SetKwInput{kwInitialize}{Initialize}
\SetKw{kwReturn}{Return}
\begin{algorithm}
\label{algo: main}
\caption{Epoch Based Online Allocation Algorithm}
\kwInput{$T, \delta$}
\kwInitialize{$\lambda = [0,\ldots, 0]$, $k=0$, $K=\log_2(T)$, $L_k=2^{k}, k=0,\ldots, K$\;}
\For{$t \gets 1,2,3,\ldots, T$}
{
    Observe $\Xvec_t$. 
    Run Detection Algorithm~(Algorithm~\ref{algo: detection}) with sample set
    $S=\{\Xvec_1,\ldots, \Xvec_{t}\}$, and threshold $\Delta_t = 64\sqrt{\frac{1}{t} \log(\frac{256et}{\delta})}$\\
    \If{Detection Algorithm \kwReturn Reject}{
      \kwTerminates.
    }
    \If{one of the agents $i\in \{1,\ldots, n\}$ has reached the allocation capacity $p^*_i T$}
    {Allocate randomly among agents who have not reached capacity}
    \Else{
    Allocate item using greedy allocation policy $\lambda$}
    
    \If{$t=L_{k+1}-1$}{
        Compute $\hat F_t$ from samples $\{\Xvec_1, \ldots, \Xvec_t\}$ as in \eqref{eq:empirical}. \\
        $\lambda \gets \lambda^*(\hat \Fvec_t)$\\
        $k\gets k+1$
    }
}
\end{algorithm}
\begin{algorithm}
  \label{algo: detection}
  \caption{Detection Algorithm}
  \kwInput{Sample set $S = \{\Xvec_1, \ldots, \Xvec_{t}\}$, threshold $\Delta_t$.}
  \For{$i \gets 1, \ldots, n$}{
      Compute $\bar F_t(x) = \frac{1}{t} \sum_{s=1}^t \1[\Xvec^s_{i} \leq x]$ as the empirical CDF of the samples collected from agent $i$\\
      Compute $\tilde F_t(x) = \frac{1}{t(n-1)} \sum_{s=1}^t\sum_{j\neq i} \1[\Xvec^s_j \leq x]$ be the empirical CDF of all reported values from the other agents.\\
      \If{$\sup_x |\tilde F_t(x) - \bar F_t(x) |\geq \frac{\Delta_t}{2}$}{
          \kwReturn Reject 
      }
  }
  \kwReturn Accept
\end{algorithm}

Our main results are the following guarantees on incentive compatibility and regret of our online allocation algorithm.

\begin{theorem}[Approximate-BIC]
  Algorithm~\ref{algo: main} is $(O(\sqrt{nT\log(nT/\delta)}), \delta)$-approximate BIC.
  \label{thm: main BIC theorem}
\end{theorem}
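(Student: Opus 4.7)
The plan is to decompose any deviation by agent $i$ into two cases: either the detector fires at some round $t^*\le T$, in which case the game stops and the further utility is zero; or the detector never fires, in which case I will argue that agent $i$'s reported empirical CDF must track the true CDF $F$ closely, which in turn limits the per-round utility advantage from deviation. Throughout I use the fact that only agent $i$ deviates, so the reports of the other $n-1$ agents are i.i.d.\ draws from $F$.

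First, I would verify that truthful reporting does not trigger the detector. Applying the martingale DKW inequality (Lemma~\ref{lem: martingale uniform convergence}) to the process of reports from the other $n-1$ agents gives $\sup_x |\tilde F_t(x) - F(x)| \le \Delta_t/4$ for all $t\le T$ with probability at least $1-\delta/2$, and under truthful reporting by agent $i$ a symmetric bound gives $\sup_x |\bar F_t(x) - F(x)| \le \Delta_t/4$. By the triangle inequality, on this joint event the detector's test statistic is at most $\Delta_t/2$, so the detector never fires. A union bound over the $n$ parallel detectors accounts for the $\log(nT/\delta)$ factor.

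Next, even when agent $i$ misreports, the concentration of $\tilde F_t$ around $F$ still holds (the others are truthful), so on the event $\cA$ that both this concentration holds and the detector never fires, the triangle inequality yields
\begin{equation*}
\sup_x |\bar F_t(x) - F(x)| \le \Delta_t \quad \text{for all } t\le T.
\end{equation*}
Moreover on $\cA$ the learner's empirical distribution $\hat F_t$, which averages all $n$ agents' reports, is within $O(\Delta_t)$ of $F$ in KS distance, so by a Lipschitz-type stability property of the Semi-Discrete Optimal Transport dual the policy parameter $\lambda^*(\hat\Fvec_t)$ stays close to $\lambda^*(\Fvec)$, and the greedy policy actually executed is close to the offline optimum throughout.

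The crux, and the main obstacle, is to translate marginal-CDF proximity into a per-round utility bound. On $\cA$ I would show
\begin{equation*}
\sum_{t=1}^T X_{i,t}\bigl(\1[\tilde\pi^i_t(\tilde\Xvec_t)=i] - \1[\pi_t(\Xvec_t)=i]\bigr) \;\le\; O\!\Bigl(\bar x\sum_{t=1}^T \Delta_t\Bigr) \;=\; O\bigl(\sqrt{nT\log(nT/\delta)}\bigr).
\end{equation*}
The intuition is that the Laguerre-cell allocation rule depends on agent $i$'s report only through its rank against the other agents' reports, and the constraint $\sup_x|\bar F_t - F|\le \Delta_t$ forces agent $i$ to spend a nearly fixed budget of ``high reports.'' Therefore the extra utility gained by concentrating these high reports on rounds with high true value is at most the reward range $\bar x$ times the KS slack $\Delta_t$ per round. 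I would formalize this via a coupling argument: any reporting strategy satisfying the marginal constraint induces a joint distribution on $(\tilde X_{i,t}, X_{i,t})$ whose total variation from the truthful joint distribution is at most $\Delta_t$, which directly bounds the per-round gain. The hard part is handling the adaptivity, both of agent $i$'s reports (they may depend on past outcomes and the current $\lambda$) and of the learner's $\lambda$ itself; I would address this by conditioning on the current epoch's $\lambda$, applying the coupling within that epoch, and using the stability of $\lambda^*(\hat\Fvec_t)$ established above to transfer bounds across epochs. Summing the per-round bound and invoking $\sum_{t=1}^T \Delta_t = O(\sqrt{T\log(nT/\delta)})$ together with the $n$-dependent union bound yields the claimed approximate-BIC guarantee.
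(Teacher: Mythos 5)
Your high-level decomposition mirrors the paper's: show the detector does not fire under truthful play, show that passing detection forces the strategic agent's reported marginals to track $F$ closely, and then split the residual advantage into a ``long-term'' part (misleading the learner) and a ``short-term'' part (exploiting the current $\lambda$). The issue is the step you yourself flag as the crux, and it is a genuine gap rather than a detail to fill in.

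You claim that if the marginal distribution of $\tilde X_{i,t}$ is within Kolmogorov--Smirnov distance $\Delta_t$ of $F$, then the joint law of $(\tilde X_{i,t}, X_{i,t})$ is within total variation $\Delta_t$ of the truthful joint (the diagonal coupling $\tilde X_{i,t}=X_{i,t}$), and you propose to conclude the per-round utility bound from that TV bound. The implication is false. A marginal constraint does not pin down the coupling at all: take $F$ uniform on $[0,1]$ and the strategy $\tilde X_{i,t}=1-X_{i,t}$. The reported marginal equals $F$ exactly (so $\Delta_t=0$), yet the joint law is supported on the anti-diagonal and has total variation distance $1$ from the truthful diagonal joint. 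So the route ``marginal-close $\Rightarrow$ joint-close $\Rightarrow$ small utility gain'' is not available, and the per-round bound cannot be obtained this way. (In this particular example the gain happens to be negative, but nothing in your argument rules out an adversarial coupling that raises utility while satisfying the marginal constraint.) The paper closes precisely this hole with a structural argument: Claim~\ref{claim: generalized monotone mapping between two distributions} and Claim~\ref{claim: best reporting function} show that, for a fixed greedy policy $\lambda$, the utility-maximizing coupling subject to a marginal constraint is the \emph{monotone} coupling, and Lemma~\ref{lem: utility gain bound given fixed allocation policy} then bounds the gain of the monotone coupling against the worst-case shifted marginal $(F-\Delta)^+$ by $\bar x\Delta$. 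That argument is not a TV bound in disguise; it exploits the monotone structure of Laguerre-cell allocation in $X_i$ via a rearrangement (Chebyshev-sum) inequality.

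A secondary, less central gap: you assert a ``Lipschitz-type stability property'' of $\lambda^*(\cdot)$, i.e.\ that $\lambda^*(\hat\Fvec_t)$ is close to $\lambda^*(\Fvec)$. The paper deliberately avoids claiming stability of the dual variable $\lambda$ itself, because it need not hold (if the optimal partition boundary lies where the density is small, $\lambda^*$ can move a lot under small perturbations of the distribution). Instead Lemma~\ref{lem: utility bound from learning error} goes through the allocation probabilities: Claim~\ref{claim: misreporting size gain abstract} shows $\|p(\Fvec,\lambda)-p(\Gvec,\lambda)\|_1 \le 2n\Delta$, Claim~\ref{claim: misreporting allocation difference} converts that into a bound on the symmetric-difference mass $\bbP(\Omega'_j\triangle\Omega_j)$, and only then does one bound the utility gap. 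You would need to replicate that chain rather than invoke stability of $\lambda^*$. Finally, a small point: for the truthful agents you can use the ordinary DKW inequality (their reports are i.i.d.); the martingale version is needed only for the strategic agent, whose reporting rule may adapt to the history.
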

Since truthful reporting constitutes an
approximate equilibrium, it is reasonable to then assume that agents will act
truthfully. We show the following individual regret bound assuming truthfulness.
\begin{theorem}[Individual Regret]
  \label{thm: individual regret}
  Assuming all agents report their valuations truthfully, then under the online allocation mechanism given by Algorithm \ref{algo: main}, with probability $1-\delta$, every agent $i$'s
  individual regret can be bounded as:
  \begin{align*}
  \Regret_i(T) 
  & \leq \frac{4\sqrt{2}}{\sqrt[]{2}-1}\sqrt[]{nT\log(\frac{4n\log_2T+nT}{\delta})}\bar x\\
  & =  O(\sqrt{nT\log(nT/\delta)})
  \end{align*}
\end{theorem}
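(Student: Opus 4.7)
The plan is to decompose $\Regret_i(T)$ over the $K=\lceil\log_2 T\rceil$ epochs of Algorithm~\ref{algo: main} and bound each epoch's contribution via two independent sources of error. Let $\lambda_k := \lambda^*(\hat\Fvec_{L_k-1})$ denote the greedy parameter used throughout epoch $k$ (rounds $L_k$ to $L_{k+1}-1$), let $U_i(\lambda,\Fvec) := \bbE_{\Xvec\sim\Fvec}[X_i\1(\Xvec\in\bbL_i(\lambda))]$ be agent $i$'s per-round expected utility under the greedy policy with parameter $\lambda$, and write $u_{i,t} := X_{i,t}\1(\pi_t(\Xvec_t)=i)$. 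The decomposition reads
\[
\Regret_i(T) = \sum_{k=0}^{K-1} 2^k\bigl(U_i(\lambda^*,\Fvec) - U_i(\lambda_k,\Fvec)\bigr) + \sum_{k=0}^{K-1}\sum_{t=L_k}^{L_{k+1}-1}\bigl(U_i(\lambda_k,\Fvec) - u_{i,t}\bigr),
\]
the first sum being the \emph{policy-estimation error} and the second the \emph{stochastic deviation}.

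As a first step I would establish a good event of probability at least $1-\delta$ on which (i) the martingale DKW inequality (Lemma~\ref{lem: martingale uniform convergence}) yields $\sup_x|\hat F_t(x) - F(x)| \leq \varepsilon_t := O(\sqrt{\log(nT/\delta)/(nt)})$ simultaneously for all $t\leq T$, (ii) under truthful reporting the detector of Algorithm~\ref{algo: detection} never fires, since pairs of agent-specific empirical CDFs obey the same uniform bound and so remain within the threshold $\Delta_t/2$, and (iii) a union-bounded Hoeffding inequality controls the within-epoch stochastic deviations for every agent and every epoch. A byproduct of (i) and step two below is that each agent's running allocation count concentrates around $p^*_i t$, so the capacity-constraint branch of Algorithm~\ref{algo: main} is essentially never invoked and contributes negligibly.

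The heart of the proof is controlling the per-epoch policy-estimation error $|U_i(\lambda_k,\Fvec) - U_i(\lambda^*,\Fvec)|$ in terms of $\varepsilon_{L_k-1}$. I would exploit the primal-dual structure of \eqref{prob: optimal transport dual}: by optimality of $\lambda_k$ for $\cE(\cdot,\hat\Fvec_{L_k-1})$ and of $\lambda^*$ for $\cE(\cdot,\Fvec)$,
\[
\cE(\lambda_k,\Fvec) - \cE(\lambda^*,\Fvec) \leq 2\sup_{\lambda}\bigl|\cE(\lambda,\Fvec) - \cE(\lambda,\hat\Fvec_{L_k-1})\bigr|.
\]
The right-hand supremum reduces, via the Laguerre-cell integrals, to the $n$-dimensional product-CDF gap; using the telescoping identity $|\prod_i F(x_i) - \prod_i \hat F(x_i)| \leq \sum_i |F(x_i) - \hat F(x_i)|$ it is at most $O(n\bar x\,\varepsilon_{L_k-1}) = O(\bar x\sqrt{n\log(nT/\delta)/L_k})$. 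Combining this dual suboptimality bound with the first-order optimality identity $\bbP_{\hat\Fvec_{L_k-1}}(\Xvec\in\bbL_i(\lambda_k)) = p^*_i$ and $\hat F\approx F$ on the good event yields $|U_i(\lambda_k,\Fvec)-U_i(\lambda^*,\Fvec)|$ of the same order, so the per-epoch estimation contribution is $O(\bar x\sqrt{nL_k\log(nT/\delta)})$. The per-epoch stochastic deviation is $O(\bar x\sqrt{L_k\log(nT/\delta)})$ by Hoeffding conditional on $\lambda_k$.

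Summing across epochs and using $\sum_{k=0}^{K-1}\sqrt{L_k} = \sum_{k=0}^{K-1}\sqrt{2^k} = (\sqrt{2T}-1)/(\sqrt{2}-1)$ yields the claimed $O(\bar x\sqrt{nT\log(nT/\delta)})$ bound, with the estimation contribution dominating and the constant $4\sqrt{2}/(\sqrt{2}-1)$ emerging naturally from this geometric sum. The main obstacle I anticipate is this dual-to-individual-utility step: translating welfare-level dual suboptimality into a \emph{per-agent} utility deviation requires care, since a naive Lipschitz argument would invoke regularity (e.g.\ positive density bounds) on $F$. The primal-dual route above sidesteps the need to directly bound $\|\lambda_k - \lambda^*\|$; instead, it routes the estimation error through the allocation-fraction identity that $\lambda_k$ satisfies under $\hat\Fvec_{L_k-1}$, and uses uniform CDF closeness to transfer it back to $\Fvec$.
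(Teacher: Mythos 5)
Your overall architecture matches the paper's: the same epoch decomposition into policy-estimation error plus within-epoch stochastic deviation, the same good event (uniform CDF concentration, detector staying silent, Hoeffding), the same observation that the capacity branch is essentially never invoked, and the same geometric sum over $\sqrt{2^k}$ at the end. Where you diverge — and where the gap is — is the key per-epoch estimation step, which you propose to route through dual suboptimality of $\cE(\cdot,\cdot)$.

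That step does not close. The bound $\cE(\lambda_k,\Fvec) - \cE(\lambda^*,\Fvec) \leq 2\sup_\lambda|\cE(\lambda,\Fvec) - \cE(\lambda,\hat\Fvec_{L_k-1})|$ controls a \emph{welfare-level} dual gap, not the per-agent quantity $|U_i(\lambda_k,\Fvec)-U_i(\lambda^*,\Fvec)|$ that $\Regret_i(T)$ requires. Writing out $\cE(\lambda_k,\Fvec) = \sum_j\int_{\bbL_j(\lambda_k)} X_j\,d\Fvec + \sum_j\lambda_{k,j}\bigl(p_j(\Fvec,\lambda_k)-p^*_j\bigr)$ shows that a small dual gap entangles the welfare deviation with a constraint-violation term weighted by the (a priori unbounded) multipliers $\lambda_{k,j}$; disentangling them, or converting dual gap into closeness of the Laguerre partitions, would need a strong-convexity or bounded-density assumption on $F$ that the paper deliberately avoids. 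There is also a smaller issue: the supremum over all $\lambda$ in your ERM-style bound is infinite, since $\bbE_{\Fvec}[\max_i(X_i+\lambda_i)]-\bbE_{\hat\Fvec}[\max_i(X_i+\lambda_i)]$ scales with the spread $\max_i\lambda_i-\min_i\lambda_i$; one would need to restrict to a compact sublevel set. You sense the difficulty — you say you will ``route the estimation error through the allocation-fraction identity'' and ``transfer it back to $\Fvec$'' — but that remark is exactly the content of the paper's Lemma~\ref{lem: utility bound from learning error} and its two supporting claims, and once you carry it out the dual-gap inequality you started from is never used. The paper's actual route is purely primal: CDF closeness of $\hat F$ and $F$ implies the allocation fractions $p_j(\cdot,\lambda)$ agree up to $n\Delta$ under a fixed $\lambda$ (Claim~\ref{claim: misreporting size gain abstract}); since $\lambda_k$ matches $p^*$ under $\hat\Fvec$ and $\lambda^*$ matches $p^*$ under $\Fvec$, the fractions of $\lambda_k$ and $\lambda^*$ under $\Fvec$ are $2n\Delta$-close; then a monotonicity argument on Laguerre cells (Claim~\ref{claim: misreporting allocation difference}) turns fraction closeness into $\bbP(\Omega^*_i\setminus\Omega_i)\le n\Delta$, which directly bounds the individual utility gap by $n\Delta\bar x$. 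You should replace the dual-suboptimality detour with this direct partition argument; the rest of your outline then goes through.

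One minor point: since Theorem~\ref{thm: individual regret} assumes all agents are truthful, the empirical CDF is built from genuinely i.i.d.\ samples and the ordinary DKW inequality (Lemma~\ref{lem: dkw}) suffices; invoking the martingale version here only costs you constants. The martingale version is needed only in the BIC analysis.
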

\syremoved{\sacomment{
Does that mean the bound on total regret (regret on total welfare) is $n^{3/2} \sqrt{T}$? Should we add that corollary? Is there an explanation for $n^{3/2}$? My guess (see later comment) is that this can be fixed, and it is because currently Claim 2 is too loose.
} \sycomment{I don't think claim 2 can be improved without additional assumption on the distributions. }}
Showing approximate incentive compatibility, and then guaranteeing
regret under the assumption of truthfulness, is an approach commonly seen in
the online mechanism design literature (e.g. Theorem~4 in
\cite{kanoria2021incentive}). 
In the next section, we describe the high level proof ideas for the main results above.

\section{Proof ideas}
\label{sec: proof ideas}
\paragraph{Proof ideas for Theorem~\ref{thm: main BIC theorem}}
We establish that the mechanism is approximately BIC by showing that no single agent has incentive to significantly deviate from reporting true valuations if all the other agents are truthful. The proof consists of two parts. In Step 1, we prove that any significant deviation from the truth can be detected and will lead the mechanism to terminate. In Step 2,3, we prove that in order to achieve a significant gain in utility, an agent indeed has to report values that significantly deviate from the truth.
\paragraph{Step 1} Assuming that there is only one (unidentified) strategic agent while all the other agents are truthful,  we first show that if the detector does not trigger a violation by time $t$
then with high probability, the empirical
distribution of valuations reported by the strategic agent is no more than
$O(1/\sqrt{t})$ away from the true distribution (Lemma~\ref{lem:
reported distributions have to be close to true distribution}). The key
observation here is that since the agents' valuations are i.i.d., we can compare
their reported values to detect if any single agent's distribution is
significantly different from everyone else's. A technical challenge in making statistical comparisons here is
that the strategic agent can adaptively change their reporting strategy over time based on
the realized outcomes. Therefore, we derive a novel martingale version of the DKW inequality to show concentration of the empirical distribution relative to the true underlying distribution.
\syremoved{The well known DKW inequality (Lemma~\ref{lem: dkw}) for comparing empirical distributions requires i.i.d. samples and therefore does not suffice for our purpose. We derive a novel martingale version of the DKW
inequality, using martingale uniform convergence (Lemma \ref{lem: martingale uniform convergence}).}
\paragraph{Step 2} In a given round, given the history, the mechanism's allocation policy is a fixed greedy allocation policy given by $\lambda$. If the distribution of strategic agent's reported values differs from the true distribution by at most $\Delta$,
 then the agent's expected utility gain in that round, compared to reporting truthfully, is at most
$O(\Delta)$,
(see Lemma~\ref{lem: utility gain bound given fixed allocation policy}).

\paragraph{Step 3} 
If over $t$ rounds, the  distribution of the strategic agent's reported values is at most $\Delta$ away from the true distribution,
then the learning
algorithm will, with high probability find an allocation policy that is at most 
$O(\sqrt{n}\Delta)$ away (in terms of individual utility) from what it would have learned if all the agents were
truthful instead (see Lemma~\ref{lem: utility bound from learning error}).

To understand the significance and distinction between results in Step 2 vs. Step 3, note that a strategic agent has two separate ways to gain utility.
The first is to report valuations in a way that the agent immediately wins more/better items under
the central planner's \emph{current} allocation policy. However, since
the central planner is updating its' allocation policy over time, the strategic
agent can also misreport in a way that benefits its' \emph{future utility},
by ``tricking'' the central planner into learning a policy that favors
him later on. Together, Step 2 and Step 3 show that the agent cannot gain significant advantage over being truthful in either manner.

In many existing works on online {\it auctions} mechanisms design, where the
central planner dynamically adjusts the reserve price over time, these two types
of strategic behaviors are in conflict: the agent either sacrifices
future utility to gain immediate utility; or  sacrifices near-term utility for
future utility. The results in those settings therefore often rely on this
observation to show approximate incentive compatibility.
In our case however, since there is no money involved, it is not clear if such a
conflict between short and long term utility exists. Nonetheless, we are able to
bound the agents' ability to strategize.  Step 2 bounds the agent's short term
incentive to be strategic, whereas Step 3 bounds the longer term incentive to be
strategic. Combining these steps gives us a proof for Theorem~\ref{thm: main
BIC theorem}.

\paragraph{Proof ideas for Theorem~\ref{thm: individual regret}}
Recall that here we assume all agents' are truthful. The proof involves two main steps.
\paragraph*{Step 1} We show that uniformly for any $t=1,\ldots,T$, with high probability, the empirical  distribution constructed from the first $t$ samples is close (within a distance of $\tilde O(1/\sqrt{nt})$) to the true  distribution $F$. Here, the factor of
$1/\sqrt{n}$ comes from the fact that in each round we observe $n$
independent samples from the value distribution, one from each of the agents. 
This also implies that if all the agents are reporting truthfully, then, with high probability, the detector  will not falsely trigger.
\paragraph*{Step 2} We show that the allocation policy learned under the empirical
distribution estimated from the samples is close to the the optimal allocation policy (Lemma~\ref{lem:
utility bound from learning error}). Specifically, after $t$ rounds if the empirical CDF is
at most $O(1/\sqrt{nt})$ away from the true distribution, then each agents'
expected utility in one round under the learned allocation policy is at most $\sqrt{n/t}$ away (both from above and from below) from the optimal.
By using an epoch based approach
we can then show that each agents' individual regret is with high probability bounded by
$O(\sqrt{nT})$ over the entire planning horizon $T$.

\section{Proof details}
\label{sec: proof outline}
We will now outline our proof in more detail. All missing proofs can be found in the Appendix. First we state the following martingale variation of the well-known Dvoretzky-Kiefer-Wolfowitz (DKW) inequality\cite{dvoretzky1956asymptotic}.
This is critical when dealing with strategic agents as they can adapt their strategy over time, resulting in non-independent (reported) values.
\begin{restatable}[Martingale Version of DKW Inequality]{lemma}{martingaleuniformconvergence}
  \label{lem: martingale uniform convergence}
  Given a sequence of random variables $Y_1,\ldots, Y_T$, let $\cF_t=\sigma(Y_1,\ldots, Y_t), t=1,\ldots, T$ be the filtration representing the information in the first $t$ variables. Let \mbox{$F_t(y) := \Pr(Y_t \le y|\cF_{t-1})$}, and  $\bar F_T(y) := \frac{1}{T}\sum_{t=1}^T \1[Y_t\leq y]$.
  Then, 
  $$\bbP\left(\sup_y \left|\bar F_T(y) - \frac{1}{T}\sum_{t=1}^T F_t(y)\right|\geq \alpha\right) \leq \left(\frac{128eT}{\alpha}\right)e^{-T\alpha^2/128}$$
\end{restatable}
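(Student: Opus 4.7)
The plan is to combine an Azuma--Hoeffding pointwise bound on $M_T(y) := \sum_{s=1}^T \bigl(\1[Y_s \le y] - F_s(y)\bigr)$ with a monotonicity-based discretization of the $y$-axis. For the pointwise step, note that $\xi_s(y) := \1[Y_s \le y] - F_s(y)$ is a bounded martingale difference with respect to $\{\cF_s\}$: since $F_s(y) = \bbE[\1[Y_s \le y] \mid \cF_{s-1}]$ is $\cF_{s-1}$-measurable and $|\xi_s(y)| \le 1$, Azuma--Hoeffding gives, for any deterministic $y$ and any $\beta > 0$,
$$\bbP\bigl(|M_T(y)| \ge T\beta\bigr) \;\le\; 2 \exp\bigl(-T\beta^2/2\bigr).$$
Taking $\beta = \alpha/8$ reproduces the target exponent $e^{-T\alpha^2/128}$; the remaining job is to turn this into a uniform-in-$y$ statement at the cost of only a polynomial prefactor.

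To pass from a single $y$ to $\sup_y$, I would use the monotonicity of both $\bar F_T(y)$ and $G_T(y) := \frac{1}{T}\sum_{t=1}^T F_t(y)$: both are non-decreasing from $0$ to $1$. The standard monotone sandwich then shows that, for any ordered grid $-\infty = y_0 < y_1 < \cdots < y_{N+1} = +\infty$,
$$\sup_y \bigl|\bar F_T(y) - G_T(y)\bigr| \;\le\; \max_{k}\bigl|\bar F_T(y_k) - G_T(y_k)\bigr| \,+\, \max_{k}\bigl[G_T(y_{k+1}) - G_T(y_k)\bigr].$$
So it suffices to find a grid of size $N$ whose consecutive $G_T$-spacing is at most $\alpha/2$; combining Azuma at $\beta = \alpha/8$ (union-bounded over the $N$ grid points) with the $\alpha/8$ grid-spacing slack then yields the stated $\sup$-deviation $\alpha$. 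The target prefactor $128eT/\alpha$ corresponds to a grid of cardinality $N = O(T/\alpha)$, so the constants fall out once the right grid is exhibited.

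The hard part---and where the martingale setting genuinely departs from the classical DKW proof---is constructing such a grid. Because $G_T$ is determined by $F_1,\ldots,F_T$ and is therefore $\cF_{T-1}$-measurable (hence random), the natural choice, the $G_T$-quantile grid $y_k^\star := \inf\{y : G_T(y) \ge k\alpha/4\}$, is itself random, and Azuma cannot be invoked at random thresholds. My plan is to replace this ideal random grid by a \emph{data-independent} super-grid of size $O(T/\alpha)$ that, by monotonicity, is guaranteed to contain an $\alpha/8$-approximation (in $G_T$-mass) of every realization of $y_k^\star$. A concrete construction is to take a sufficiently fine deterministic mesh on the (bounded, or rescaled) support of $Y_t$ intersected with dyadic quantile levels, so that regardless of the realization of $F_1,\ldots,F_T$ every $y_k^\star$ has a deterministic neighbor within the required $G_T$-tolerance. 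With the super-grid fixed, pointwise Azuma applies at each of its $O(T/\alpha)$ deterministic points, the union bound produces the prefactor $128eT/\alpha$, the exponent $e^{-T\alpha^2/128}$ comes from Step~1 at $\beta = \alpha/8$, and the monotonicity sandwich closes the argument. Verifying that the super-grid has cardinality $O(T/\alpha)$ for any realization is the technical crux; the other ingredients are routine.
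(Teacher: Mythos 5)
Your plan---Azuma--Hoeffding pointwise plus a monotone-sandwich discretization over a deterministic grid---departs from the paper's proof, which is a one-line invocation of Rakhlin and Sridharan's sequential uniform convergence theory. Your Azuma step is correct (the exponent $e^{-T\alpha^2/128}$ does fall out at $\beta = \alpha/8$), and the sandwich inequality is a valid use of monotonicity. But the gap is exactly where you flag the ``technical crux'': no deterministic super-grid with the property you need can exist. Unlike the i.i.d.\ DKW proof, where the quantile grid is built from a fixed distribution $F$, here $G_T = \frac{1}{T}\sum_t F_t$ is a \emph{random}, path-dependent function whose jumps can land at random locations dictated by the realized past.

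Concretely, take $Y_1\sim\text{Uniform}[0,1]$ and $Y_t = Y_1$ for all $t\ge 2$, so $F_t(y)=\1[Y_1\le y]$ for $t\ge 2$ and $G_T(y)=\frac{1}{T} y + \frac{T-1}{T}\1[Y_1\le y]$ jumps by $(T-1)/T$ at the continuously distributed random point $Y_1$. For any finite deterministic grid $y_1<\cdots<y_N$, almost surely $Y_1\in(y_k,y_{k+1})$ for some $k$, and for that $k$ one has $G_T(y_{k+1})-G_T(y_k)\ge (T-1)/T$, which destroys the $\alpha/2$ spacing bound no matter how large $N$ is or how it is chosen. (The lemma is still true in this example---$\bar F_T$ has a matching jump at $Y_1$, so the supremum reduces to the empirical process of a single uniform draw---but that cancellation is invisible to your decomposition, which bounds the two sandwich terms separately.) A fix would have to let the grid include the observations $Y_1,\dots,Y_T$, but then the grid is $\cF_T$-measurable, Azuma at those thresholds is no longer a martingale concentration bound, and the argument needs sequential symmetrization over trees of outcomes---which is precisely the Rakhlin--Sridharan machinery the paper cites. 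As written, the proposal does not close the argument.
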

\syremoved{Proof of Lemma~\ref{lem: martingale uniform convergence} is in
Appendix~\ref{sec: proof of martingale uniform convergence}. }

Next we introduce a new notation  to to denote the fraction of allocation that $j$ receives under the greedy allocation policy with parameter $\lambda$ and valuation distribution $\Fvec$:
$$p_j(\Fvec, \lambda) := \bbP_{\Xvec\sim \Fvec}(\Xvec\in \bbL_j(\lambda)).$$
We start with proving Theorem~\ref{thm: individual regret}, as we will use this to prove Theorem~\ref{thm: main BIC theorem} later.

\subsection{Individual Regret Bound (Theorem~\ref{thm: individual regret})}
In Algorithm~\ref{algo: main}, the allocation policy is trained on the
empirical distribution constructed from samples. We want to show that this difference between empirical and population distribution will not impair the performance of the resulting allocation policy too much.

\syremoved{\sacomment{we can move the proof of the lemma below to appendix if needed.}}
\begin{lemma}
  \label{lem: utility bound from learning error}
  Let $\Gvec = G^1 \otimes \ldots
  \otimes G^n$, and $\Fvec = F^1\otimes \ldots \otimes F^n$ be two distributions over $[0,\bar x]^n$ where the marginals on each
  coordinate are independent. Suppose $\sup_x|F^i(x)-G^i(x)|\leq
  \Delta\,\forall i$.  Let $\lambda = \lambda^*(\Gvec)$, and $\lambda^* =
  \lambda^*(\Fvec)$.  Then 
  $$
  \left|\bbE_{\Xvec\sim \Fvec}[u_i(\Xvec, \Xvec, \lambda)] - \bbE_{\Xvec\sim \Fvec}[u_i[\Xvec, \Xvec, \lambda^*]]\right|\leq n\Delta\bar x
  $$
\end{lemma}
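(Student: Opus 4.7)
The plan is to prove the lemma via a hybrid argument that interpolates between $\Gvec$ and $\Fvec$ one marginal at a time, combined with the triangle inequality and the dual optimality conditions that tie $\lambda$ to $\Gvec$ and $\lambda^*$ to $\Fvec$. Let me define $U_i(\eta, \vH) := \bbE_{\Xvec\sim \vH}[X_i \1[\Xvec\in\bbL_i(\eta)]]$ for any product distribution $\vH$ and any $\eta\in\bbR^n$. Using the product structure, I would rewrite this as the one-dimensional expectation
$$U_i(\eta, \Fvec) = \bbE_{X_i\sim F^i}\Bigl[X_i \prod_{j\ne i} F^j(X_i + \eta_i - \eta_j)\Bigr],$$
with the convention $F^j(y)=0$ for $y<0$ and $F^j(y)=1$ for $y>\bar x$. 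An analogous formula holds for $U_i(\eta, \Gvec)$ with $F^j$ replaced by $G^j$.

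The key intermediate step is a \emph{fixed-policy} stability bound: for any $\eta\in\bbR^n$,
$$|U_i(\eta,\Fvec) - U_i(\eta,\Gvec)| \le n\bar x\Delta.$$
I would prove this by a hybrid argument: define $\vH^{(0)}=\Gvec$ and successively swap $G^k$ for $F^k$ in coordinate $k=1,\ldots,n$ to obtain $\vH^{(n)}=\Fvec$, and bound the change at each step by $\bar x\Delta$. For $k\ne i$, the Laguerre-cell indicator factors as a product of a condition on $X_k$ alone (the CDF of which contributes a factor bounded in $[0,1]$) times a condition independent of $X_k$; integrating out $X_k$ against $d(F^k-G^k)$ gives a factor of $F^k-G^k$ evaluated at a threshold, uniformly bounded by $\Delta$. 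For $k=i$, the integrand $g(X_i) := X_i\prod_{j\ne i}F^j(X_i+\eta_i-\eta_j)$ is non-negative and non-decreasing on $[0,\bar x]$ with $g(\bar x)\le \bar x$, so integration by parts in $\int g(x)\,d(F^i-G^i)(x)$ bounds the step by $\bar x\Delta$. Summing over the $n$ steps yields the claim.

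Next I would use the triangle inequality, splitting
\begin{align*}
U_i(\lambda,\Fvec) - U_i(\lambda^*,\Fvec)
&= [U_i(\lambda,\Fvec) - U_i(\lambda,\Gvec)] \\
&\quad + [U_i(\lambda,\Gvec) - U_i(\lambda^*,\Gvec)] \\
&\quad + [U_i(\lambda^*,\Gvec) - U_i(\lambda^*,\Fvec)].
\end{align*}
The outer two terms are controlled directly by the fixed-policy stability (applied with $\eta=\lambda$ and $\eta=\lambda^*$). For the middle term I would leverage the fact that $\lambda$ is dual-optimal for $\Gvec$, so the envelope condition $p_i(\Gvec,\lambda)=p^*_i$ holds exactly, while the same hybrid argument applied to the allocation-probability functional (dropping the $X_i$ factor) gives $|p_i(\Gvec,\lambda^*)-p_i(\Fvec,\lambda^*)|\le n\Delta$, i.e.\ $|p_i(\Gvec,\lambda^*)-p^*_i|\le n\Delta$. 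This means the greedy policies $\lambda$ and $\lambda^*$ induce nearly identical allocation fractions under $\Gvec$, so their induced individual utilities should also be close.

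\textbf{Main obstacle.} The delicate step is the middle term: under $\Gvec$, $\lambda$ is optimal only in the \emph{sum-of-welfares} sense, so an agent-wise bound does not follow directly from dual optimality. I expect to bridge this by writing $U_i(\lambda,\Gvec) - U_i(\lambda^*,\Gvec) = \bbE_\Gvec[X_i(\1_{E_1}-\1_{E_2})]$, where $E_1$ is the event ``$i$ wins under $\lambda$ but not $\lambda^*$'' and $E_2$ is its mirror, and then relating $\Pr_\Gvec(E_1)-\Pr_\Gvec(E_2) = p_i(\Gvec,\lambda) - p_i(\Gvec,\lambda^*)$ (bounded by $n\Delta$) to the utility difference by the symmetry in how $X_i$ enters both events on the boundary $X_i + \lambda_i = X_j + \lambda_j$. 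Carrying this reduction through cleanly---absorbing the factor-of-three loss from the naive triangle inequality to recover the stated $n\bar x\Delta$ constant---will be the most technical part, and will likely rely on the product structure in the rewritten form of $U_i$ above.
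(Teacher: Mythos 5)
Your plan takes a genuinely different route from the paper, and it has two real gaps. The paper does not use a three-term triangle inequality at all. Instead it exploits the exact identity
$p_j(\Gvec,\lambda)=p^*_j=p_j(\Fvec,\lambda^*)$ to rewrite the allocation-probability comparison across distributions as a comparison across \emph{policies under the single distribution} $\Fvec$: combining an $n\Delta$ bound on $\sum_j(p_j(\Fvec,\lambda)-p_j(\Gvec,\lambda))^+$ with that identity yields $\sum_j(p_j(\Fvec,\lambda)-p_j(\Fvec,\lambda^*))^+\le n\Delta$, and a separate geometric lemma about Laguerre cells then gives $\bbP_\Fvec(\Omega_i\setminus\Omega_i^*)\le n\Delta$ and $\bbP_\Fvec(\Omega_i^*\setminus\Omega_i)\le n\Delta$. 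The utility gap is then bounded in one step by $\bar x\cdot n\Delta$. There is no triangle inequality and hence no accumulation of constants.

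Two concrete problems with your plan. First, the constant: your outer two terms each contribute $n\bar x\Delta$ from the hybrid/fixed-policy stability argument, and (once done correctly) the middle term contributes another $n\bar x\Delta$, so the triangle inequality gives at best $3n\bar x\Delta$, not the stated $n\bar x\Delta$. You flag this yourself, but there is no apparent way to ``absorb'' the factor of $3$ within the decomposition you chose; the paper avoids it entirely by never splitting into three pieces. Second, and more substantively, your sketched bound on the middle term is incorrect as stated. You propose to control $\bbE_\Gvec[X_i(\1_{E_1}-\1_{E_2})]$ via the scalar $\Pr_\Gvec(E_1)-\Pr_\Gvec(E_2)=p_i(\Gvec,\lambda)-p_i(\Gvec,\lambda^*)$, invoking a ``symmetry'' near the cell boundary. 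That quantity controls only the net signed probability mass, not the utility difference: $E_1$ could be concentrated where $X_i$ is near $\bar x$ and $E_2$ where $X_i$ is near $0$, making the utility difference large even when $\Pr(E_1)\approx\Pr(E_2)$. What you actually need is a bound on $\max\{\Pr_\Gvec(E_1),\Pr_\Gvec(E_2)\}$ separately, which is exactly what the paper's Claim~\ref{claim: misreporting allocation difference} supplies: from $\sum_j(p_j(\Gvec,\lambda)-p_j(\Gvec,\lambda^*))^+\le n\Delta$ it extracts $\bbP_\Gvec(\Omega_i\setminus\Omega_i^*)\le n\Delta$ and $\bbP_\Gvec(\Omega_i^*\setminus\Omega_i)\le n\Delta$ via a non-trivial ``flow'' argument on Laguerre cells (ordering the coordinates by $\lambda_j-\lambda_j'$ and observing that flow can only go one way). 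Your proposal treats this as a detail to be filled in (``their induced individual utilities should also be close''), but it is the central geometric fact of the proof and is neither stated nor proved in your plan.

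Your fixed-policy stability step (hybrid swapping one marginal at a time, with an integration-by-parts argument for the $i$-th coordinate) is correct and is essentially an additive-in-$n$ bound of the same flavor as the paper's Claim~\ref{claim: misreporting size gain abstract 0}; that part is fine. The gaps are the middle-term argument and the constant.
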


\paragraph*{Proof of Theorem~\ref{thm: individual regret}} Now we have the main ingredients for Theorem~\ref{thm: individual regret}. 
We use the DKW inequality to show that the empirical distribution constructed in \eqref{eq:empirical} is close to the
true distribution w.h.p.. Then we use  Lemma~\ref{lem: utility bound from learning error} to show that the allocation policy selected by the
learner based on the empirical distribution is almost optimal in expectation. The details can be found in the Appendix~\ref{sec: proof of ind regret theorem}.

\subsection{Approximate-Bayesian Incentive Compatibility (Theorem~\ref{thm: main BIC theorem})}
Theorem~\ref{thm: individual regret} says that online utility of each agent
cannot be too far below the offline optimum if everyone behaves truthfully. In
order to show approximate-BIC, it suffices to show that the
strategic agent cannot gain too much more than the offline optimum.
To do so, we need to bound both the short term and longer term incentives for
the agent to be strategic. 

\subsubsection{Short term incentive}
We start with bounding the short term strategic incentive.  We first show that
if agent reports from an average distribution that is very different from the
truthful distribution, then with high probability Algorithm~\ref{algo:
detection} can detect that. Note that given the strategic agent's strategy in a given round, 
his reported value is drawn from a distribution potentially different from $F$.

\begin{lemma}
  \label{lem: reported distributions have to be close to true distribution}
  Fix a time step $t$. Let $\Delta = 64\sqrt{\frac{\log(\frac{256et}{\delta})}{t}}$. 
  Let $F_s, s=1,\ldots, t$ be the strategic agent's
  reported value distributions in each time step given the history, i.e.,
  $F_s(x) := \bbP(\tilde X_{i,s} \le x | {\cal H}_s).$
  If the average distribution $\bar F = \frac{1}{t}\sum_{s=1}^t F_s$ is
  such that $\sup_x|\bar F(x) - F(x)| \geq \Delta$, then Algorithm~\ref{algo:
  main} will terminate at or before time $t$ with probability at least $1-\delta$.
\end{lemma}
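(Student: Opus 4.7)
The plan is to show that when the detector (Algorithm~\ref{algo: detection}) is run at time $t$ on the index corresponding to the strategic agent $i$, the two empirical CDFs $\bar F_t$ (built from $i$'s reports) and $\tilde F_t$ (built from the pooled $(n-1)t$ truthful reports) differ by at least $\Delta_t/2$ with probability at least $1-\delta$, so Algorithm~\ref{algo: main} terminates by round $t$. The starting point is the reverse triangle inequality
$$\sup_x|\bar F_t(x) - \tilde F_t(x)| \ \geq\ \sup_x|\bar F(x) - F(x)| \ -\ \sup_x|\bar F_t(x) - \bar F(x)| \ -\ \sup_x|\tilde F_t(x) - F(x)|,$$
so given the hypothesis $\sup_x|\bar F - F|\geq \Delta$ it suffices to bound each of the latter two deviations by $\Delta/4 = 16\sqrt{\log(256et/\delta)/t}$ with probability at least $1-\delta/2$.

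For the strategic agent's deviation, I would apply Lemma~\ref{lem: martingale uniform convergence} to the sequence $Y_s = \tilde X_{i,s}$ with respect to the enriched filtration $\cF_s = \sigma(\cH_{s+1})$; under this filtration the conditional CDF of $Y_s$ given $\cF_{s-1} = \sigma(\cH_s)$ is precisely $F_s$, so the lemma's $\frac{1}{t}\sum_s F_s$ coincides with $\bar F$ and its empirical counterpart with $\bar F_t$. Plugging $\alpha = \Delta/4$ into the martingale DKW tail makes the exponent $t\alpha^2/128 = 2\log(256et/\delta)$, which combined with the prefactor $128et/\alpha$ gives a tail of order $\delta^2/(\sqrt{t}\sqrt{\log(256et/\delta)})$, comfortably below $\delta/2$. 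For the truthful agents' deviation, the $(n-1)t$ reports are i.i.d.\ samples from $F$, so the classical DKW inequality yields $\sup_x|\tilde F_t(x) - F(x)|\leq \Delta/4$ with probability at least $1-\delta/2$ with substantial slack (the exponent scales with $(n-1)t$ rather than $t$).

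Combining these two events via the triangle inequality, we get $\sup_x|\bar F_t - \tilde F_t|\geq \Delta - \Delta/4 - \Delta/4 = \Delta/2 = \Delta_t/2$ with probability at least $1-\delta$, so the detector returns \textsc{Reject} at time $t$. The main technical obstacle is the constant bookkeeping inside the martingale DKW tail bound: the prefactor $128eT/\alpha$ depends on $\alpha$ itself, so one has to check algebraically that the logarithmic factor inside the chosen $\Delta$ is large enough to absorb this prefactor while still summing the two failure probabilities to at most $\delta$. A secondary subtlety is ensuring that the filtration supports the martingale structure: since in round $s$ agents report simultaneously based only on the pre-round history $\cH_s$, taking $\cF_s = \sigma(\cH_{s+1})$ keeps $Y_s$ adapted and makes $F_s$ $\cF_{s-1}$-measurable, which is exactly what Lemma~\ref{lem: martingale uniform convergence} requires.
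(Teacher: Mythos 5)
Your proposal is correct and follows essentially the same route as the paper's proof: decompose $\sup_x|\bar F_t-\tilde F_t|$ via the triangle inequality, bound the strategic agent's empirical deviation from $\bar F$ by $\Delta/4$ using the martingale DKW lemma, bound the truthful pool's deviation from $F$ by $\Delta/4$ using classical DKW, and union-bound. Your constant bookkeeping (plugging $\alpha=\Delta/4$ directly into the tail and checking the prefactor is absorbed) is an equivalent way of doing the algebraic verification the paper carries out by rearranging the inequality $(128et/\alpha)e^{-t\alpha^2/128}\le\delta/2$.
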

\syremoved{
\sacomment{The discussion below is not rigorous in its definitions. I will try to rewrite it once I go through the proof of Lemma 5. I am not even sure why there is a reporting function. You have two random variables $X, \tilde X$, and it seems you want to couple them so that the marginal distribution distribution is same as that of $X, \tilde X$ but the joint distribution is such that $\Pr(X\ge \tilde X)=0$. I tentatively rewrote the lemma statement below.}}

Next, we show that if the agent restricts the reported distribution to not deviate more than $\Delta$ from the true distribution (so that the deviation may go undetected by the detection algorithm), then the potential gain in the agent's utility compared to truthful reporting is upper bounded by $\bar x\Delta$. This bounds the agent's incentive to be strategic.
\begin{lemma}
  \label{lem: utility gain bound given fixed allocation policy}
  Fix a round $t$ and a single strategic agent $i$, so that the remaining agents are truthful, i.e., $\tilde X_{j,t}=X_{j,t}, \forall j\ne i$. Let $F_r(\cdot)$ denote the marginal distribution of values $\tilde X_{i,t}$ reported by the strategic agent $i$ at time $t$ conditional on the history, i.e., 
$$F_r(x) := \bbP(\tilde X_{i,t} \le x | {\cal H}_t).$$  Suppose that $\sup_x |F(x) - F_r(x)| \leq \Delta$. Then, at any time $t$, given any  greedy allocation policy $\lambda$,  
  $$\bbE[u_i(\tilde \Xvec_t, \Xvec_t, \lambda) | {\cal H}_t] - \bbE[u_i(\Xvec_t, \Xvec_t,
  \lambda)] \leq \bar x\Delta$$
\end{lemma}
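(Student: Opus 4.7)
The plan is to condition on $\cH_t$ throughout, so that the greedy parameter $\lambda$ is fixed, and to reduce the question to a one-dimensional comparison. Because every $j \neq i$ reports truthfully, the greedy rule allocates the round-$t$ item to agent $i$ exactly when $\tilde X_{i,t} > M_t$, where $M_t := \max_{j \neq i}(X_{j,t} + \lambda_j - \lambda_i)$. Since the values $\{X_{j,t}\}_{j\neq i}$ are fresh i.i.d.\ draws from $F$, $M_t$ is independent of $(X_{i,t}, \tilde X_{i,t})$ given $\cH_t$. I would first rewrite the utility gain as $\bbE[g(M_t)\mid \cH_t]$, where
$$g(m) := \bbE\bigl[X_{i,t}\bigl(\1[\tilde X_{i,t} > m] - \1[X_{i,t} > m]\bigr) \,\big|\, \cH_t\bigr],$$
and focus on showing $g(m) \le \bar x\Delta$ uniformly in $m$.

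To bound $g(m)$ I would decompose the indicator difference via the two non-trivial events $B_m = \{\tilde X_{i,t} > m,\ X_{i,t} \le m\}$ and $C_m = \{X_{i,t} > m,\ \tilde X_{i,t} \le m\}$: these are, respectively, the only way misreporting wins an extra item and the only way it forfeits one. Writing $m^+ := \max(m,0)$ and using $0 \le X_{i,t} \le \bar x$ together with the defining inequalities of each event yields
$$X_{i,t}\,\1[B_m] \le \min(m^+, \bar x)\,\1[B_m], \qquad X_{i,t}\,\1[C_m] \ge \min(m^+, \bar x)\,\1[C_m].$$
In the extreme regimes $m < 0$ or $m > \bar x$ one of $B_m, C_m$ degenerates and these inequalities continue to hold trivially. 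Taking conditional expectations and subtracting yields the clean intermediate bound $g(m) \le \min(m^+, \bar x)\bigl(\bbP(B_m \mid \cH_t) - \bbP(C_m \mid \cH_t)\bigr)$.

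The last step is a short inclusion--exclusion calculation: $\bbP(B_m \mid \cH_t) - \bbP(C_m \mid \cH_t) = \bbP(\tilde X_{i,t} > m \mid \cH_t) - \bbP(X_{i,t} > m) = F(m) - F_r(m)$, which the hypothesis bounds by $\Delta$. Combined with $\min(m^+, \bar x) \le \bar x$, this gives $g(m) \le \bar x\Delta$ uniformly in $m$, and integrating over $M_t$ proves the lemma. The main obstacle is more a subtlety than a difficulty: although the agent can pick any joint distribution of $(X_{i,t}, \tilde X_{i,t})$ consistent with the marginals $F$ and $F_r$, the gain collapses to the one-dimensional Kolmogorov gap once we condition on the threshold $M_t$. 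The care needed is in the bookkeeping for $m \notin [0,\bar x]$: the strategic agent may report outside $[0,\bar x]$ and $M_t$ can sit outside that range, but the $\min(m^+, \bar x)$ truncation, which mirrors the support of $X_{i,t}$, absorbs both ends without a separate case split.
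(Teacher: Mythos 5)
Your proof is correct and takes a genuinely different route from the paper. The paper's argument is \emph{constructive}: it first identifies the worst-case reporting strategy explicitly --- the marginal $G(x) = (F(x)-\Delta)^+$ coupled monotonically to the true value via $\tilde X = G^{-1}(F^u(X))$ --- by an exchange argument (Claim~\ref{claim: best reporting function} together with the monotone-coupling machinery of Claim~\ref{claim: generalized monotone mapping between two distributions}), and then computes the utility gain under that specific extremal joint distribution, truncating the integral at $x_\Delta = F^{-1}(\Delta)$. Your argument is \emph{oblivious}: you never identify which joint distribution is worst. Instead you condition on the competing threshold $M_t = \max_{j\neq i}(X_{j,t}+\lambda_j-\lambda_i)$, which is valid because $M_t$ is independent of $(X_{i,t},\tilde X_{i,t})$ given $\cH_t$, and observe that the per-threshold gain $g(m) = \bbE[X_{i,t}(\1[\tilde X_{i,t}>m]-\1[X_{i,t}>m])\mid\cH_t]$ decomposes over the disjoint events $B_m$ (spurious win) and $C_m$ (forfeited win). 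The pointwise inequalities $X_{i,t}\1[B_m]\le \min(m^+,\bar x)\1[B_m]$ and $X_{i,t}\1[C_m]\ge \min(m^+,\bar x)\1[C_m]$ then collapse the two-dimensional coupling question to the one-dimensional Kolmogorov gap $\bbP(B_m)-\bbP(C_m) = F(m)-F_r(m)\le\Delta$, and $\min(m^+,\bar x)\le\bar x$ closes it. The same constant $\bar x\Delta$ emerges. Your version is shorter and entirely self-contained, whereas the paper's detour through the monotone coupling pays off because that same machinery is reused in the proof of Claim~\ref{claim: misreporting size gain abstract 0}; still, for this lemma alone, your threshold-conditioning argument is the cleaner one, and it handles the endpoint cases ($m<0$ or $m>\bar x$) automatically via the truncation $\min(m^+,\bar x)$ rather than by a separate discussion.
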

Note that  $F_r$ specifies only the marginal distribution of $\tilde X_{i,t} | \cH_t$ and not the joint distribution of $(\tilde \Xvec_t, \Xvec_t)|\cH_t$. Indeed the above lemma claims that the given bound on utility gain holds for all possible joint distributions as long as the marginal $F_r$ of $\tilde X_{i,t}|\cH_t$ is at most $\Delta$ away from $F$.

Intuitively, Lemma~\ref{lem: reported distributions have to be close to true
distribution} and Lemma~\ref{lem: utility gain bound given fixed allocation policy}
together bound the agent's short term incentive to be strategic: if the agent deviates
from the truthful distribution too much, then the mechanism will terminate early and the agent will lose out on all the future utility (Lemma~\ref{lem: reported distributions
have to be close to true distribution}); 
and given any greedy allocation strategy set by the central planner, we have that 
if the agents deviates within the undetectable
range of Algorithm~\ref{algo: detection}, then the gain in utility compared to acting truthfully is small (Lemma~\ref{lem: utility gain bound given
fixed allocation policy}). Next, we bound an agent's incentive to lie in order to make the mechanism learn a suboptimal greedy allocation policy that is more favorable to the agent.

\subsubsection{Long term incentive}
In order to bound the longer term incentive to be strategic, we want to show that
despite agent $i$ being strategic, the central planner can still learn an allocation policy that closely approximates the offline optimal allocation policy. 
This means that the agent's influence over 
the central planner's allocation policies is limited. 
\begin{lemma}
  \label{lem: learning under strategic report}
  Fix a round $T'\le T$ and a strategic agent $i$. If agent $i$ is the only one being strategic, and Algorithm~\ref{algo:
  detection} has not been triggered by the end of time $T'$, then with
  probability $1-\delta$, $\hat\lambda\coloneqq \lambda^*(\hat \Fvec_{T'})$ satisfies
  $$
  \bbE[u_i(\Xvec, \Xvec, \hat\lambda)]-\bbE[u_i(\Xvec, \Xvec, \lambda^*)] \leq n\Delta_{T'} \bar x
  $$
  where $\Delta_{T'} = 81\sqrt{\frac{1}{nT'} \log(\frac{256e(T')}{\delta})}$ and $\lambda^*=\lambda^*(\Fvec)$.
\end{lemma}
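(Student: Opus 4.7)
\textbf{Plan for proving Lemma~\ref{lem: learning under strategic report}.} The plan is to reduce the statement to Lemma~\ref{lem: utility bound from learning error} by establishing that, on the event that the detector has not been triggered by time $T'$, the pooled empirical CDF $\hat F_{T'}$ constructed by the learner is uniformly within $\Delta_{T'}$ of the true marginal $F$. Once this uniform-closeness is in hand, applying Lemma~\ref{lem: utility bound from learning error} with $\Gvec = \hat \Fvec_{T'}$, $\Fvec = \Fvec$, and the same $\Delta_{T'}$ immediately yields the claimed one-sided utility bound (since Lemma~\ref{lem: utility bound from learning error} even bounds the absolute value of the utility difference).

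\textbf{Controlling the empirical CDF.} Split $\hat F_{T'}$ into the contribution from the $(n-1)T'$ reports of the truthful agents and the $T'$ reports of agent $i$, writing
\begin{equation*}
\hat F_{T'}(x) \;=\; \tfrac{n-1}{n}\,\hat F^{\mathrm{tr}}_{T'}(x) \;+\; \tfrac{1}{n}\,\hat F^{\mathrm{str}}_{T'}(x).
\end{equation*}
The truthful part $\hat F^{\mathrm{tr}}_{T'}$ is the empirical CDF of $(n-1)T'$ i.i.d.\ samples from $F$, so the ordinary DKW inequality gives $\sup_x \bigl|\hat F^{\mathrm{tr}}_{T'}(x) - F(x)\bigr| = O\bigl(\sqrt{\log(1/\delta)/((n-1)T')}\bigr)$ with probability $1-\delta/3$. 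For the strategic part, use the triangle inequality
\begin{equation*}
\bigl|\hat F^{\mathrm{str}}_{T'}(x) - F(x)\bigr| \;\le\; \bigl|\hat F^{\mathrm{str}}_{T'}(x) - \bar F(x)\bigr| + \bigl|\bar F(x) - F(x)\bigr|,
\end{equation*}
where $\bar F = \tfrac{1}{T'}\sum_{s=1}^{T'} F_s$ is the time-averaged conditional distribution of agent $i$'s reports. The first term is controlled by the martingale DKW inequality (Lemma~\ref{lem: martingale uniform convergence}) giving $O\bigl(\sqrt{\log(T'/\delta)/T'}\bigr)$ with probability $1-\delta/3$; the second is controlled by taking the contrapositive of Lemma~\ref{lem: reported distributions have to be close to true distribution}, since the detector not having triggered by time $T'$ forces $\sup_x|\bar F - F| \le 64\sqrt{\log(256eT'/\delta)/T'}$ with probability $1-\delta/3$.

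\textbf{Combining and concluding.} A weighted triangle inequality with weights $(n-1)/n$ and $1/n$ then yields, after a union bound over the three events,
\begin{equation*}
\sup_x |\hat F_{T'}(x) - F(x)| \;\le\; 81\sqrt{\tfrac{1}{nT'}\log\bigl(\tfrac{256eT'}{\delta}\bigr)} \;=\; \Delta_{T'},
\end{equation*}
with probability at least $1-\delta$; the constant $81$ emerges once the three concentration constants are multiplied by the averaging weights and combined (the $1/n$ weight on the strategic term is what brings the $1/\sqrt{n}$ scaling into $\Delta_{T'}$). Feeding this into Lemma~\ref{lem: utility bound from learning error} with $\Gvec = \hat \Fvec_{T'}$ yields $\bbE[u_i(\Xvec,\Xvec,\hat\lambda)] - \bbE[u_i(\Xvec,\Xvec,\lambda^*)] \le n\Delta_{T'}\bar x$ on the same high-probability event.

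\textbf{Main obstacle.} The delicate step is the strategic-agent term: its reports are neither i.i.d.\ nor independent of the truthful reports, since the misreporting strategy can adapt to the entire observed history. The martingale DKW inequality of Lemma~\ref{lem: martingale uniform convergence} is exactly the tool that decouples empirical fluctuations from the conditional means $F_s$, and Lemma~\ref{lem: reported distributions have to be close to true distribution} then upgrades the mere silence of the detector into a genuine concentration bound on $\bar F$ around $F$. The remaining work---splitting $\delta$ into three, tracking weights, and composing with Lemma~\ref{lem: utility bound from learning error}---is routine bookkeeping.
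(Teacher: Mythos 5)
Your proposal matches the paper's proof essentially line by line: decompose the pooled empirical CDF $\hat F_{T'}$ into the truthful agents' part (controlled by ordinary DKW) and the strategic agent's part (controlled through $\bar F$ by the martingale DKW inequality plus the contrapositive of Lemma~\ref{lem: reported distributions have to be close to true distribution}), combine with the $(n-1)/n$ and $1/n$ weights, and feed the resulting $\Delta_{T'}$-closeness into Lemma~\ref{lem: utility bound from learning error}. The only cosmetic difference is bookkeeping: you split the failure budget into $\delta/3$ pieces, whereas the paper notes that the martingale-DKW bound is already an internal byproduct of the proof of Lemma~\ref{lem: reported distributions have to be close to true distribution} (see Eq.~\eqref{eq: lem reported distributions have to be close helper}), so it reuses the same $\delta$ budget for both the $|\bar F_{T'}-\bar F|$ and $|\bar F-F|$ bounds rather than paying twice; this only changes the constant slightly and has no structural consequence.
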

In particular, consider $T'=L_k$. Then the lemma above shows that if an agent was not kicked out by the
end of epoch $k-1$, then with high probability the greedy allocation policy  in epoch $k$ will be such that the agents' expected
utility by being truthful is close to what he would have received in the offline
optimal solution (Lemma~\ref{lem: learning under strategic report}). We can now
combine this result with Lemma~\ref{lem: reported distributions have to be close
to true distribution} and Lemma~\ref{lem: utility gain bound given fixed
allocation policy} to bound the utility that any single strategic agent can gain
over the entire trajectory. 
\begin{lemma}
  If agent $i$ is the only one being strategic, then with probability
  $1-\delta$, agent $i$'s online utility is upper bounded by
  $$
  \sum_{t=1}^T u_i(\tilde \Xvec_t, \Xvec_t, \tilde \lambda_{k_t}) \leq T\bbE\left[u_i(\Xvec, \Xvec, \lambda^*)\right] + \frac{286\sqrt{2}}{\sqrt{2}-1}\sqrt{nT \log(\frac{256e\log_2T}{\delta})}  \bar x
  $$
  Here $k_t$ denotes the epoch number that time step $t$ lies in, and $ \tilde \lambda_{k_t}$ denotes the allocation policy used by the central planner in that epoch.
  \label{lem: single strategic agent utility bound}
\end{lemma}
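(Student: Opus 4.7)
The plan is to decompose the agent's total utility $\sum_{t=1}^{T'}u_i(\tilde\Xvec_t,\Xvec_t,\tilde\lambda_{k_t})$---where $T'\le T$ denotes the (random) round at which the mechanism terminates---into three pieces controlled respectively by (i) martingale concentration of realized versus conditional-expected utility, (ii) the short-term incentive bound of Lemma~\ref{lem: utility gain bound given fixed allocation policy}, and (iii) the long-term learning bound of Lemma~\ref{lem: learning under strategic report}. First I set up a good event: split the failure budget $\delta$ into three parts and further distribute one third across $K+1=\log_2 T+1$ epoch boundaries, applying Lemma~\ref{lem: reported distributions have to be close to true distribution} and Lemma~\ref{lem: learning under strategic report} at each $L_k$. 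On this good event, whenever the mechanism is still alive at $L_k$: the cumulative average reported marginal $\bar F_{L_k}:=\frac{1}{L_k}\sum_{s\le L_k}F_s$ satisfies $\|\bar F_{L_k}-F\|_\infty\le\Delta_{L_k}$, and the learned policy satisfies $\bbE[u_i(\Xvec,\Xvec,\tilde\lambda_k)]-\bbE[u_i(\Xvec,\Xvec,\lambda^*)]\le n\Delta_{L_k}\bar x$.

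Write $\sum_{t=1}^{T'}u_i=M_{T'}+\sum_{t=1}^{T'}\bbE[u_i\mid\cH_t]$ where $M_{T'}$ is a martingale-difference sum with increments bounded in $[-\bar x,\bar x]$; optional-stopping Azuma--Hoeffding gives $M_{T'}=O(\bar x\sqrt{T\log(1/\delta)})$ with probability $\ge 1-\delta/3$. For the conditional-expectation sum I split over epochs. Within epoch $k$ the policy $\tilde\lambda_k$ is constant, so by the tower property,
\begin{equation*}
\bbE\!\left[\sum_{t=L_k}^{L_{k+1}-1}u_i(\tilde\Xvec_t,\Xvec_t,\tilde\lambda_k)\,\bigg|\,\cH_{L_k}\right] \;=\; L_k\cdot\bbE_{\bar H_k}\!\left[u_i(\tilde\Xvec,\Xvec,\tilde\lambda_k)\right],
\end{equation*}
where $\bar H_k$ is the $\cH_{L_k}$-conditional average, over $t\in[L_k,L_{k+1}-1]$, of the joint law of $(X_{i,t},\tilde X_{i,t})$, and the other coordinates $X_{-i}$ are drawn fresh from $F$ on the RHS. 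Since the true values are i.i.d.\ and independent of $\cH_{L_k}$, the $X_i$-marginal of $\bar H_k$ is $F$; its $\tilde X_i$-marginal is $\bar Q_k:=\bbE[\bar F^{\mathrm{ep}}_k\mid\cH_{L_k}]$ with $\bar F^{\mathrm{ep}}_k:=\frac{1}{L_k}\sum_{s=L_k}^{L_{k+1}-1}F_s$. Because Lemma~\ref{lem: utility gain bound given fixed allocation policy}'s conclusion depends only on the two marginals of the $(X_i,\tilde X_i)$ joint law, its argument applies to $\bar H_k$ and yields
\begin{equation*}
\bbE\!\left[\sum_{t=L_k}^{L_{k+1}-1}u_i\,\bigg|\,\cH_{L_k}\right]\;\le\; L_k\,\bbE[u_i(\Xvec,\Xvec,\tilde\lambda_k)]+L_k\,\bar x\,\|\bar Q_k-F\|_\infty.
\end{equation*}

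To bound $\|\bar Q_k-F\|_\infty$ I use the detector guarantee: since $L_{k+1}\bar F_{L_{k+1}}=L_k\bar F_{L_k}+L_k\bar F^{\mathrm{ep}}_k$, a triangle inequality combined with the good-event bounds $\|\bar F_{L_j}-F\|_\infty\le\Delta_{L_j}$ at $j=k,k+1$ gives $\|\bar F^{\mathrm{ep}}_k-F\|_\infty\le 2\Delta_{L_{k+1}}+\Delta_{L_k}\le c\,\Delta_{L_k}$ for an absolute constant $c$, and Jensen transfers this to $\|\bar Q_k-F\|_\infty$. Combining with Lemma~\ref{lem: learning under strategic report}, each epoch contributes at most $L_k\bar x(n+c)\Delta_{L_k}=O(\bar x\sqrt{nL_k\log(L_k K/\delta)})$ above the benchmark $L_k\bbE[u_i(\Xvec,\Xvec,\lambda^*)]$. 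Summing over $k=0,\ldots,K$ with $\sum_k\sqrt{L_k}=\sum_k(\sqrt 2)^k\le\tfrac{\sqrt 2}{\sqrt 2-1}\sqrt T$ produces the claimed $O(\bar x\sqrt{nT\log(\log T/\delta)})$ bound once the martingale term is added. The principal obstacle I anticipate is this aggregated use of Lemma~\ref{lem: utility gain bound given fixed allocation policy}: a round-by-round application would require controlling $\sum_s\|F_s-F\|_\infty$, which the detector does \emph{not} bound, whereas a single application to the averaged joint $\bar H_k$ exposes exactly the marginal $\bar Q_k$ that the detector controls. The aggregation is legitimate precisely because Lemma~\ref{lem: utility gain bound given fixed allocation policy} is marginal-only and so applies uniformly across all couplings of $(X_i,\tilde X_i)$.
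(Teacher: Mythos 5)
Your overall route matches the paper's: decompose the horizon into geometric epochs, at each epoch boundary invoke Lemma~\ref{lem: reported distributions have to be close to true distribution} to control the \emph{cumulative average} reported marginal, difference the cumulative averages to control the \emph{within-epoch} average marginal, apply Lemma~\ref{lem: utility gain bound given fixed allocation policy} to the epoch-averaged joint law (exactly the paper's construction of a coupling $r$ with reported marginal $\bar F^3$), apply Lemma~\ref{lem: learning under strategic report} for the learned-policy error, and sum $\sqrt{L_k}$ over epochs. Your key observation---that Lemma~\ref{lem: utility gain bound given fixed allocation policy} is marginal-only, so a single application to the averaged coupling exposes precisely the quantity the detector controls, whereas a per-round application would require controlling $\sum_s\|F_s-F\|_\infty$ which nobody controls---is exactly what the paper exploits. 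Your separation of the realized sum into a bounded-increment martingale (controlled by optional-stopping Azuma--Hoeffding) and a sum of conditional expectations is a cleaner presentation than the paper's terse invocation of Corollary~\ref{cor: martingale uniform convergence} for the fluctuation term, and gives the same $O(\sqrt{T\log(1/\delta)})$ contribution.

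There is one gap you should close explicitly: early termination inside an epoch. Your identity $L_{k+1}\bar F_{L_{k+1}}=L_k\bar F_{L_k}+L_k\bar F^{\mathrm{ep}}_k$ and the detector bound ``at $j=k+1$'' both presume the mechanism survives through the entire epoch $k$. But $T'$ can land strictly inside an epoch, at which point $\bar F_{L_{k+1}}$ is undefined and the detector guarantee is only available at the (random) stopping time $\tau=\min(T',L_{k+1}-1)$. The paper handles this by working with $\tau$ throughout and capping the resulting marginal-deviation bound at $1$, which keeps the prefactor $\tfrac{\tau}{\tau-L_k}$ harmless. In your framework the fix is the same: write $\tau\bar F_\tau=L_k\bar F_{L_k}+(\tau-L_k)\bar F^{\mathrm{ep}}_k$, apply the detector bound at $\tau$ and at $L_k$, and observe that in the per-epoch surplus $(\tau-L_k)\,\bar x\,\|\bar F^{\mathrm{ep}}_k-F\|_\infty$ the factor $\tau-L_k$ cancels the denominator, leaving $\bar x\,(\tau\Delta_\tau+L_k\Delta_{L_k})=O(\bar x\sqrt{n\tau\log(\tau/\delta)})$ regardless of how short the truncated epoch is. Without this, the triangle-inequality step as written is vacuous on the final, possibly truncated epoch, which is precisely the one that dominates the sum. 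Once patched, the proposal is sound and lands on the same $\tilde O(\sqrt{nT})$ bound.
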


\paragraph*{Proof of Theorem~\ref{thm: main BIC theorem}}
We have already proven Theorem \ref{thm: individual regret} which bounds individual regret defined as  the difference $T\mathbb{E}[u_i(\Xvec,\Xvec, \lambda^*)] - \sum_{t=1}^T u_i(\Xvec_t, \Xvec_t, \lambda_{k_t})$, i.e., the difference between the utility of agent $i$ under the offline optimal policy and that under the allocation policy learned by the algorithm {\it when all the agents are truthful}. The proof of Theorem \ref{thm: main BIC theorem} follows from plugging in the upper bound on $T\mathbb{E}[u_i(\Xvec,\Xvec, \lambda^*)]$ from this theorem into Lemma~\ref{lem: single strategic agent utility bound}, to obtain the desired bound on the expression $  \sum_{t=1}^T u_i(\tilde \Xvec_t, \Xvec_t, \tilde \lambda_{k_t}) - \sum_{t=1}^T u_i(\Xvec_t, \Xvec_t, \lambda_{k_t})$, i.e., on the total gain in utility achievable by misreporting under our mechanism. Further details are in Appendix~\ref{sec: proof of main bic theorem}.

\section{Limitations and Future Directions}
\label{sec: future directions}
\syremoved{We believe that online resource allocation with strategic agents is an important
research question and we are excited to be making progress in this direction.} 
Although our goal is to develop mechanisms that are robust to selfish and strategic agents, real applications often involve bad faith actors that have extrinsic motivation to behave adversarially. As such, deployment of such resource allocation mechanisms to critical applications requires significant additional validation. 
In future work we would like to explore the limit of relaxing the i.i.d.
assumption that we place on the distribution of valuations across agents. This is a natural relaxation
because if one agent thinks the item is good then it's likely that other agents
would like the item as well. Furthermore it is also conceivable that agents are
heterogeneous and so have different value distributions for the items. However
this seems to require a completely different strategy for detecting, and
disincentivize strategic behaviors, as we can no longer catch the strategic
agent through comparing each agent's reported distribution with that of others.

\section*{Acknowledgement}
This work was supported in part by an NSF CAREER award [CMMI 1846792] awarded to author S. Agrawal.
\printbibliography

\newpage
\appendix

\section{Concentration Results}
\begin{lemma}[DKW Inequality \cite{dvoretzky1956asymptotic}]
  \label{lem: dkw}
  Given i.i.d. samples $X_1, \ldots, X_T$ from a distribution $F$ (cdf), 
  let $\hat F_T(x) = \frac{1}{T}\sum_{t=1}^T \1[X_t\leq x]$.
  Then, 
  $$\bbP\left(\sup_x \left|\hat F_T(x)-  F(x)\right|\geq \alpha\right) \leq 2e^{-2T\alpha^2}$$
\end{lemma}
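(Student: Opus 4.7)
The plan is to follow Massart's (1990) sharp proof of the DKW inequality, which establishes exactly the constants stated in the lemma. The argument has three conceptual stages: reduction to a canonical distribution, reduction to a one-sided bound, and a martingale/exponential-moment argument for the one-sided bound itself.

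First, I would reduce to the case where $F$ is the uniform distribution on $[0,1]$. Via the probability integral transform $U_i := F(X_i)$ (using the generalized inverse when $F$ has atoms), the $U_i$ are i.i.d. uniform on $[0,1]$ and
$\sup_x |\hat F_T(x) - F(x)| = \sup_{u \in [0,1]} |\hat G_T(u) - u|$,
where $\hat G_T$ is the empirical CDF of the $U_i$. This is a standard trick that preserves the quantity of interest while letting me exploit the order-statistic structure of uniform samples.

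Second, I would split the two-sided deviation into one-sided deviations. By symmetry (replace $U_i$ with $1-U_i$), a bound of the form
$$\bbP\!\left(\sup_{u\in[0,1]} \bigl(\hat G_T(u) - u\bigr) \geq \alpha\right) \leq e^{-2T\alpha^2}$$
implies the analogous bound for the other direction, and a union bound then produces the factor of $2$ in front of $e^{-2T\alpha^2}$.

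Third, the heart of the proof is the sharp one-sided bound. The idea is to control the moment generating function of $\sup_u (\hat G_T(u)-u)$ and apply Chernoff. Observe that, conditioned on the order statistics, $T(\hat G_T(u) - u)$ is a tied-down random walk indexed by the order statistics, and one can realize a suitable exponential functional of it as a nonnegative supermartingale. Doob's maximal inequality applied to this supermartingale, followed by optimization of the exponential parameter $\theta$, produces exactly the exponent $-2T\alpha^2$. This step is where Massart's care replaces the polynomial prefactor and the looser constant $1/128$ obtained in our martingale DKW inequality (Lemma~\ref{lem: martingale uniform convergence}) with the tight form.

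The main obstacle is precisely the third step. A simpler route — discretizing $x$ on a fine grid, applying Hoeffding at each grid point (since $T\hat F_T(x)\sim \mathrm{Binomial}(T,F(x))$), and union-bounding — yields only $\mathrm{poly}(T,1/\alpha)\cdot e^{-cT\alpha^2}$ with $c<2$; this is what we already obtain by specializing Lemma~\ref{lem: martingale uniform convergence} to the i.i.d. setting. Removing the polynomial prefactor \emph{and} upgrading the constant in the exponent to the optimal $2$ requires the supermartingale construction on the order statistics, which is the nontrivial technical contribution of Massart's proof and the step I would devote the most care to.
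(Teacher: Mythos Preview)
Your outline correctly sketches Massart's sharp proof of the DKW inequality, but note that the paper does not prove this lemma at all: it simply states the inequality as a known result and cites \cite{dvoretzky1956asymptotic}, treating it as a black box to be invoked where needed (e.g., in the proofs of Theorem~\ref{thm: individual regret} and Lemma~\ref{lem: learning under strategic report}). So there is no ``paper's own proof'' to compare against; your proposal goes well beyond what the paper does by actually unpacking the argument, and what you have written is a faithful summary of the standard route to the tight constant.
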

\subsection{Proof of Lemma~\ref{lem: martingale uniform convergence}}
\label{sec: proof of martingale uniform convergence}
\martingaleuniformconvergence*
\begin{proof}
  This follows from sequential uniform convergence, see Lemma~10,11 in
  \cite{rakhlin2015sequential}, and the fact that indicator functions have
  fat-shattering dimension $1$. 
\end{proof}
A more convenient way to use Lemma~\ref{lem: martingale uniform convergence} is the following corollary:
\begin{corollary}
  \label{cor: martingale uniform convergence}
Given a sequence of random variables $Y_1,\ldots, Y_T$, let $\cF_t=\sigma(Y_1,\ldots, Y_t), t=1,\ldots, T$ be the filtration representing the information in the first $t$ variables. Suppose \mbox{$F_t(y) = \Pr(Y_t \le y|\cF_{t-1})$}, and let $\bar F_T(y) := \frac{1}{T}\sum_{t=1}^T \1[Y_t\leq y]$.
  If $\alpha\geq 16\sqrt{\frac{ \log(\frac{128 et}{\delta})}{T}}$, then with probability $1-\delta$
  $$\sup_x \left|\bar F_T(x) - \frac{1}{T}\sum_{t=1}^T F_t(x)\right|\leq \alpha$$
\end{corollary}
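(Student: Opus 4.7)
The plan is to derive the corollary as a direct consequence of Lemma~\ref{lem: martingale uniform convergence} by choosing $\alpha$ to make the right-hand side of the tail bound equal to $\delta$, and then inverting. Concretely, the lemma furnishes
$$\bbP\left(\sup_y \left|\bar F_T(y) - \frac{1}{T}\sum_{t=1}^T F_t(y)\right|\geq \alpha\right) \leq \left(\frac{128eT}{\alpha}\right)e^{-T\alpha^2/128},$$
so it suffices to show that for the threshold $\alpha^* := 16\sqrt{\log(128eT/\delta)/T}$, the right-hand side is at most $\delta$, and that the bound is monotone nonincreasing in $\alpha$ once $\alpha \ge \alpha^*$, so that the statement for all $\alpha \geq \alpha^*$ follows.

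First I would substitute $\alpha^*$ into the exponential: since $(\alpha^*)^2 = 256 \log(128eT/\delta)/T$, we have $T(\alpha^*)^2/128 = 2\log(128eT/\delta)$, and hence $e^{-T(\alpha^*)^2/128} = (128eT/\delta)^{-2}$. Multiplying by the prefactor $128eT/\alpha^*$ gives
$$\left(\frac{128eT}{\alpha^*}\right)e^{-T(\alpha^*)^2/128} = \frac{\delta^{2}}{128eT\,\alpha^*}.$$
I would then check that this is at most $\delta$: under the mild assumption $\log(128eT/\delta) \ge 1$ (otherwise the bound is trivial since the empirical CDF is bounded in $[0,1]$), we have $\alpha^* \ge 16/\sqrt{T}$, so $128eT\,\alpha^* \ge 128e \cdot 16\sqrt{T} \ge \delta$ for any $\delta \le 1$, which yields the desired inequality.

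Finally, to conclude the statement for all $\alpha \geq \alpha^*$ rather than only at $\alpha = \alpha^*$, I would note that the map $\alpha \mapsto (128eT/\alpha)\,e^{-T\alpha^2/128}$ has logarithmic derivative $-1/\alpha - T\alpha/64$, which is negative for every $\alpha > 0$; thus the bound is strictly decreasing in $\alpha$, and increasing $\alpha$ above $\alpha^*$ only tightens it. Combining these observations, the event $\{\sup_x |\bar F_T(x) - \tfrac{1}{T}\sum_t F_t(x)| \le \alpha\}$ holds with probability at least $1-\delta$ whenever $\alpha \ge \alpha^*$, which is exactly the corollary.

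The main (and essentially only) obstacle is purely bookkeeping: matching constants so that the particular choice $16$ in front of the square root cancels the $128$ in the exponent to produce a $\delta^{-2}$ that dominates the $1/\alpha$ prefactor. There is no probabilistic content beyond what Lemma~\ref{lem: martingale uniform convergence} already provides, and the monotonicity check is a one-line computation, so I would keep the write-up brief.
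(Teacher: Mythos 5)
Your proposal is correct and follows essentially the same approach as the paper: both simply invert the tail bound of Lemma~\ref{lem: martingale uniform convergence} by setting $\left(\frac{128eT}{\alpha}\right)e^{-T\alpha^2/128}\le\delta$ and solving for $\alpha$, with your version substituting $\alpha^*$ explicitly and invoking monotonicity in $\alpha$ while the paper manipulates the inequality symbolically. The two write-ups differ only in bookkeeping style, not in substance.
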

\begin{proof}
  \begin{align*}
    &\left(\frac{128eT}{\alpha}\right)e^{-T\alpha^2/128} \leq\delta\\
  \iff& \alpha^2\geq \frac{128 \log(\frac{128 eT}{\delta})}{T} + \frac{128}{T}\log(\frac{1}{\alpha^2})\\
  \impliedby& \alpha^2\geq \frac{256 \log(\frac{128 eT}{\delta})}{T} \\
  \iff& \alpha\geq 16\sqrt{\frac{ \log(\frac{128 eT}{\delta})}{T}}
\end{align*}
\end{proof}

\section{Proof of Theorem~\ref{thm: individual regret}}
\label{sec: proof of ind regret theorem}
\subsection{Proof of Lemma~\ref{lem: utility bound from learning error}}
We first state two helper claims. The first one states that for any fixed greedy allocation policy $\lambda$, if the two
distributions of valuations are similar, then the final allocation sizes for
each receiver will also be close.
\begin{claim}
  Fix a greedy allocation policy $\lambda$. Let $\Gvec = G^1 \otimes \ldots
  \otimes G^n$, and $\Fvec = F^1\otimes \ldots \otimes F^n$ be two
  distributions over $[0, \bar x]^n$ where the marginals in each coordinate are
  independent.  Suppose $\sup_x|F^i(x) - G^i(x)|\leq \Delta\,\forall i$.  Then 
  $$\sum_j (p_j(\Fvec,
  \lambda)-p_j(\Gvec, \lambda))^+ = \sum_j (p_j(\Gvec, \lambda)-p_j(\Fvec, \lambda))^+ \leq n\Delta.$$
  \label{claim: misreporting size gain abstract}
\end{claim}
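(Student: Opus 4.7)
The first equality is immediate under the mild assumption that ties $X_j+\lambda_j=X_k+\lambda_k$ have measure zero under both $\Fvec$ and $\Gvec$ (which holds whenever the marginals are continuous, and can be handled by a symmetric tie-breaking convention otherwise). Under that assumption $\sum_j p_j(\Fvec,\lambda)=\sum_j p_j(\Gvec,\lambda)=1$, so the signed differences sum to zero and their positive and negative parts contribute equally. Hence it remains to bound $\sum_j |p_j(\Fvec,\lambda)-p_j(\Gvec,\lambda)|$ by $2n\Delta$, which is equivalent to the stated inequality.

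The plan is a hybrid (telescoping) argument over coordinates. Define product distributions
$$H_i \coloneqq F^1\otimes\cdots\otimes F^i\otimes G^{i+1}\otimes\cdots\otimes G^n,\qquad i=0,\ldots,n,$$
so that $H_0=\Gvec$ and $H_n=\Fvec$. By the triangle inequality applied to $p_j(\Fvec,\lambda)-p_j(\Gvec,\lambda)=\sum_{i=1}^n\bigl(p_j(H_i,\lambda)-p_j(H_{i-1},\lambda)\bigr)$ and summation over $j$, it suffices to prove that for every $i$,
$$\sum_{j=1}^n \bigl|p_j(H_i,\lambda)-p_j(H_{i-1},\lambda)\bigr| \;\le\; 2\Delta.$$

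The key step is the per-coordinate bound above, which I establish by conditioning on the vector $\Xvec_{-i}$ (whose marginal distribution is identical under $H_i$ and $H_{i-1}$). Given $\Xvec_{-i}$, the greedy policy assigns the item to $\argmax_k(X_k+\lambda_k)$; letting $\tau\coloneqq \max_{k\neq i}(X_k+\lambda_k)-\lambda_i$ and $j^\star\coloneqq \argmax_{k\neq i}(X_k+\lambda_k)$, the conditional winner is agent $i$ iff $X_i>\tau$, and is agent $j^\star$ otherwise. Therefore, conditional on $\Xvec_{-i}$, the induced distribution over winners has support on exactly $\{i,j^\star\}$, and
$$\bigl|p_i(H_i,\lambda\mid\Xvec_{-i})-p_i(H_{i-1},\lambda\mid\Xvec_{-i})\bigr| = \bigl|p_{j^\star}(H_i,\lambda\mid\Xvec_{-i})-p_{j^\star}(H_{i-1},\lambda\mid\Xvec_{-i})\bigr| = \bigl|F^i(\tau)-G^i(\tau)\bigr|\le \Delta,$$
while all other $j$ contribute zero. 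Summing the two terms and taking expectation over $\Xvec_{-i}$ yields the $2\Delta$ bound, completing the per-step estimate.

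The main obstacle I anticipate is not the main probabilistic step (which boils down to the observation that perturbing one marginal only shifts mass between agent $i$ and a single deterministic ``runner-up'') but rather bookkeeping around ties: strict inequality in the Laguerre cell $\bbL_j(\lambda)$ means that without continuity assumptions a positive measure of $\Xvec$ may lie in no cell, so $\sum_j p_j(\cdot,\lambda)$ can be $<1$ and the first equality must be reinterpreted. This can be handled cleanly by invoking continuity of the marginals (as is standard in the optimal transport formulation referenced in the paper), or by breaking ties via an arbitrary fixed priority rule and redefining the Laguerre cells accordingly, which preserves the hybrid argument verbatim.
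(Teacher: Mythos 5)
Your proof is correct, and it follows the paper's outer structure (the telescoping decomposition through the hybrid product distributions $H_i$, with $\|p(\Fvec,\lambda)-p(\Gvec,\lambda)\|_1 \leq \sum_i \|p(H_i,\lambda)-p(H_{i-1},\lambda)\|_1$ and then halving to get the one-sided sum). However, your proof of the per-coordinate step is a genuinely different and markedly simpler argument than the one in the paper. The paper establishes the single-coordinate bound (its Claim~\ref{claim: misreporting size gain abstract 0}) by arguing that the extremal perturbations of $G^i$ within $\Delta$ sup-distance are the shifted CDFs $\tilde F(x) = (G^i(x)-\Delta)^+$ and $\hat F(x)=\min(G^i(x)+\Delta,1)$, invoking the monotone coupling $G^{-1}(F^u(\cdot))$ from Claim~\ref{claim: generalized monotone mapping between two distributions} to compare win probabilities under an arbitrary admissible $F'$ against those under $\tilde F$ (resp.~$\hat F$), and then explicitly integrating the extremal cases. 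Your conditioning argument bypasses all of this: condition on $\Xvec_{-i}$ (whose law is the same under $H_i$ and $H_{i-1}$ because the hybrids differ only in coordinate $i$), observe that the allocation reduces to a single threshold comparison $X_i \gtrless \tau$ between agent $i$ and the deterministic runner-up $j^\star$, so that only two of the $n$ win probabilities can change, each by exactly $|F^i(\tau)-G^i(\tau)|\le\Delta$; Jensen then gives $\sum_j|p_j(H_i,\lambda)-p_j(H_{i-1},\lambda)|\le 2\Delta$. This replaces an extremal-distribution/coupling argument with an almost-immediate probabilistic observation, and it makes transparent \emph{why} the per-coordinate bound is $2\Delta$ (two agents affected, each by at most $\Delta$). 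Your caveat about ties is the same one the paper lives with implicitly in its Laguerre-cell definition; both proofs rely on continuity or a fixed tie-breaking rule, so that is not a gap specific to your approach.
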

\syremoved{Proof of this Claim is in Appendix~\ref{sec: proof of claim misreporting size
gain abstract}. }
Next we show that if the allocation sizes are similar for two
different greedy allocation policies, then the corresponding allocation decisions (domain partitions)  are also similar.
\begin{claim}
\label{claim: misreporting allocation difference}
  Let $\lambda', \lambda$ be any two fixed greedy allocation policies, 
  and $\Fvec$ a distribution over $[0, \bar x]^n$.  For
  all $j$, let $\Omega_j'= \bbL_j(\lambda')$, and
  $\Omega_j= \bbL_j(\lambda)$.  Suppose $\sum_j
  (p_j(\Fvec, \lambda')-p_j(\Fvec, \lambda))^+ = \sum_j (p_j(\Fvec, \lambda)-p_j(\Fvec, \lambda'))^+ \leq
  \Delta$. Then
  $$\bbP(\Omega'_j \setminus \Omega_j) \leq \Delta \quad\forall j$$
\end{claim}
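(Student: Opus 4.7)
My approach will be to track the \emph{mass flow} between cells as the partition changes from $\{\Omega_k\}$ to $\{\Omega'_j\}$. For each $k\ne j$, I would define $f_{k\to j}:=\bbP_{\Xvec\sim\Fvec}(\Xvec\in\Omega_k\cap\Omega'_j)$, the probability mass that sits in cell $k$ under $\lambda$ and in cell $j$ under $\lambda'$. Then $\bbP(\Omega'_j\setminus\Omega_j)=\sum_{k\ne j}f_{k\to j}$ and $p_j(\Fvec,\lambda')-p_j(\Fvec,\lambda)=\sum_{k\ne j}(f_{k\to j}-f_{j\to k})$, so the problem reduces to bounding the total inflow into cell $j$ by $\Delta$.

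The key structural fact I plan to exploit is that the Laguerre geometry forbids cyclic flow. Any $\Xvec\in\Omega_k\cap\Omega'_j$ must simultaneously satisfy $X_k+\lambda_k>X_j+\lambda_j$ and $X_j+\lambda'_j>X_k+\lambda'_k$; subtracting yields $\lambda'_j-\lambda_j>\lambda'_k-\lambda_k$. Writing $\delta_i:=\lambda'_i-\lambda_i$, this shows that $f_{k\to j}>0$ forces $\delta_j>\delta_k$ strictly (in particular, $\delta_j=\delta_k$ already renders $\Omega_k\cap\Omega'_j$ empty). I would then relabel indices so that $\delta_1\ge\delta_2\ge\cdots\ge\delta_n$, after which $f_{k\to i}=0$ whenever $i\ge k$; mass moves only from higher-indexed cells into strictly lower-indexed cells.

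To finish, fix $j$ and set $A:=\bigcup_{i\le j}\Omega_i$ and $A':=\bigcup_{i\le j}\Omega'_i$. Every term in $A\setminus A'=\bigcup_{k\le j,\,i>j}(\Omega_k\cap\Omega'_i)$ has $k\le j<i$, and hence $\delta_k\ge\delta_j\ge\delta_i$ by the ordering, so each $f_{k\to i}=0$ and $\bbP(A\setminus A')=0$. Chaining then gives
\begin{align*}
\bbP(\Omega'_j\setminus\Omega_j) \;=\; \sum_{k>j}f_{k\to j} \;&\le\; \sum_{i\le j,\,k>j}f_{k\to i} \;=\; \bbP(A'\setminus A) \;=\; \bbP(A')-\bbP(A) \\
\;&=\; \sum_{i\le j}\bigl(p_i(\Fvec,\lambda')-p_i(\Fvec,\lambda)\bigr) \;\le\; \sum_{i=1}^n\bigl(p_i(\Fvec,\lambda')-p_i(\Fvec,\lambda)\bigr)^+ \;\le\; \Delta,
\end{align*}
the first inequality because the $i=j$ slice is at most the full non-negative double sum, and the last by hypothesis. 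The main subtlety I foresee is handling atoms of $\Fvec$ on Laguerre-cell boundaries: these form a null set under mild regularity, and otherwise can be absorbed by a consistent tie-breaking rule applied uniformly to $\lambda$ and $\lambda'$, without affecting the flow accounting above.
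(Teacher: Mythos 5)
Your proof is correct and follows essentially the same approach as the paper: track probability mass flowing between Laguerre cells, observe that $f_{k\to j}>0$ forces $\delta_j>\delta_k$ (so, after sorting by $\delta_i=\lambda'_i-\lambda_i$, flow is acyclic and goes in only one direction), and bound the one-sided inflow into cell $j$ by the net change in allocation mass across a threshold index. The paper sorts in the opposite order and compresses the last step into a terse ``net outflow'' statement, whereas your sets $A,A'$ make the same accounting explicit and cleaner, but the substance is identical.
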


\syremoved{\sacomment{From the proof it looks like the above bound is quite loose if used for proving for "all $j$". What I mean is that I think one could use the same proof to show that $\Pr(\cup_j \Omega'_j \backslash \Omega_j )\le \Delta$ as well. This is why I think bounding total regret using individual regret bound gets us an extra $n$. }}
Using these two Claims, we can now prove Lemma~\ref{lem: utility bound from learning error}. The proofs for these two helper Claims follow after the proof of Lemma~\ref{lem: utility bound from learning error}.
\begin{proof}[Proof of Lemma~\ref{lem: utility bound from learning error}]
  Claim~\ref{claim: misreporting size gain abstract} shows that 
  $$
  \sum_j (p_j(\Fvec,
  \lambda)-p_j(\Gvec, \lambda))^+ = \sum_j (p_j(\Gvec, \lambda)-p_j(\Fvec,
  \lambda))^+ \leq n\Delta
  $$
  Note that by definition of $\lambda^*$ (due to the constraint $\Pr(\Omega_j)=p_j^*$ in problem \eqref{prob: optimal partition problem}), we have  $p_j(\Gvec,\lambda) = p^*_j= p_j(\Fvec, \lambda^*)$ for all $j$. This means that 
  $$
  \sum_j (p_j(\Fvec,
  \lambda)-p_j(\Fvec, \lambda^*))^+ = \sum_j (p_j(\Fvec, \lambda^*)-p_j(\Fvec,
  \lambda))^+ \leq n\Delta
  $$
  Now we can apply Claim~\ref{claim: misreporting allocation difference} and
  conclude that
  \begin{align}
    \bbP(\Omega^*_j \setminus \Omega_j) \leq n\Delta \quad\forall j,\quad 
    \text{and }\quad &\bbP(\Omega_j \setminus \Omega^*_j) \leq n\Delta \quad\forall j,\label{eq: utility bound from learning error helper}
  \end{align}
  where $\Omega_j = \bbL_j(\lambda)$ and
  $\Omega^*_j = \bbL_j(\lambda^*)$.
  Therefore,
  \begin{align*}
    &\bbE_{\Xvec\sim \Fvec}[u_i(\Xvec, \Xvec, \lambda)] - \bbE_{\Xvec\sim \Fvec}[u_i[\Xvec, \Xvec, \lambda^*]]\\
    = &\int_{\Xvec\in \Omega_i}  X_i\,d\Fvec(\Xvec) - \int_{\Xvec\in \Omega^*_i}  X_i\,d\Fvec(\Xvec)\\
    \leq &\int_{\Xvec\in \Omega_i\setminus\Omega^*_i}  X_i\,d\Fvec(\Xvec) \\
    \leq &n\Delta \bar x
  \end{align*}
  Using the same steps as above we can also show that 
  $$
  \bbE_{\Xvec\sim \Fvec}[u_i(\Xvec, \Xvec, \lambda^*)] - \bbE_{\Xvec\sim \Fvec}[u_i[\Xvec, \Xvec, \lambda]]\leq n\Delta\bar x.
  $$

\end{proof}

\subsubsection{Proof of Claim~\ref{claim: misreporting size gain abstract}}
\label{sec: proof of claim misreporting size gain abstract}
We first show variant of Claim~\ref{claim: misreporting size gain abstract} where
the two distributions only differ in one coordinate:
\begin{claim}
  Fix a greedy allocation policy $\lambda$. Let $\Gvec = G^1 \otimes G^2 \ldots
  \otimes G^n$, and $\Fvec = F^1\otimes F^2\ldots \otimes F^n$ be two
  distributions over $[0, \bar x]^n$ where the marginals in each coordinate are
  independent. Assume that $\Gvec$ and $\Fvec$ differ only in one coordinate, w.l.o.g. say coordinate $i$. 
  Then, if $\sup_x|F^i(x) - G^i(x)|\leq \Delta$,  
  $$\sum_j (p_j(\Fvec, \lambda)-p_j(\Gvec, \lambda))^+ = \sum_j (p_j(\Gvec,
  \lambda)-p_j(\Fvec, \lambda))^+ \leq \Delta.$$
  \label{claim: misreporting size gain abstract 0}
\end{claim}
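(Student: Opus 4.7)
My plan proceeds by two reductions followed by a direct computation. First, since the Laguerre cells $\{\bbL_j(\lambda)\}_{j=1}^n$ partition $[0,\bar x]^n$ (up to measure-zero boundaries) under any absolutely continuous product measure, both $\sum_j p_j(\Fvec,\lambda)$ and $\sum_j p_j(\Gvec,\lambda)$ equal $1$. Hence $\sum_j(p_j(\Fvec,\lambda)-p_j(\Gvec,\lambda))=0$, which immediately yields the claimed equality between the two sums of positive parts in the statement, and moreover both equal $\tfrac{1}{2}\sum_j|p_j(\Fvec,\lambda)-p_j(\Gvec,\lambda)|$. So it suffices to prove this $\ell^1$-distance is at most $2\Delta$.

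Second, because $\Fvec$ and $\Gvec$ share all marginals except the $i$-th, I would condition on $X_{-i}\sim\bigotimes_{k\ne i}F^k$ (the common marginal) and examine the Laguerre partition along each slice $\{(x_i,x_{-i}):x_i\in[0,\bar x]\}$. The key structural observation is that the defining inequalities $x_i+\lambda_i > x_k+\lambda_k$ ($k\ne i$) for agent $i$'s cell collapse, on such a slice, to the single threshold condition $x_i > c(x_{-i}) := \max_{k\ne i}(x_k+\lambda_k)-\lambda_i$; symmetrically, the only other agent whose cell meets the slice is $j^*(x_{-i}) := \argmax_{k\ne i}(x_k+\lambda_k)$, which occupies $\{x_i < c(x_{-i})\}$, while every remaining cell is empty on the slice. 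Writing $D(x) := F^i(x)-G^i(x)$, so that $|D|\le\Delta$ by hypothesis, slice-wise integration then yields
\[
p_i(\Fvec,\lambda)-p_i(\Gvec,\lambda) \;=\; -\bbE_{X_{-i}}\bigl[D(c(X_{-i}))\bigr],
\]
and, for every $j\ne i$,
\[
p_j(\Fvec,\lambda)-p_j(\Gvec,\lambda) \;=\; \bbE_{X_{-i}}\bigl[\1[j^*(X_{-i})=j]\,D(c(X_{-i}))\bigr].
\]
Taking absolute values, the agent-$i$ contribution is at most $\bbE_{X_{-i}}[|D(c(X_{-i}))|]\le\Delta$; and since the events $\{j^*(X_{-i})=j\}_{j\ne i}$ partition the $X_{-i}$-space, the sum over $j\ne i$ is likewise at most $\bbE_{X_{-i}}[|D(c(X_{-i}))|]\le\Delta$. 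Adding the two and halving as in the first reduction closes the claim.

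The main obstacle is the structural identification in the second step: recognizing that, conditional on $x_{-i}$, the $n$-way Laguerre partition collapses to a single cut point $c(x_{-i})$ with exactly two live cells, namely agent $i$ and the leader $j^*(x_{-i})$. Once this is pinned down, the rest is bookkeeping that uses the Kolmogorov bound $|D|\le\Delta$ at a single point per slice. I note in passing that this one-cut structure is precisely what breaks when $\Fvec$ and $\Gvec$ differ in multiple coordinates: a generic slice can then carry up to $n-1$ cut points and the positive variation of $D$ across them may blow up by a factor of $n$. This is presumably why the general Claim~\ref{claim: misreporting size gain abstract} loses an extra factor of $n$, and suggests the natural route to it is a hybrid argument that changes one coordinate at a time and invokes the present single-coordinate claim $n$ times.
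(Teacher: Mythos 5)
Your proof is correct and takes a genuinely different, and noticeably more elementary, route than the paper's. The paper argues by an extremal argument: it identifies the shifted distributions $\tilde F(x)=(G^i(x)-\Delta)^+$ (with $\tilde F(\bar x)=1$) and $\hat F(x)=\min(G^i(x)+\Delta,1)$, uses the monotone-coupling construction of Claim~\ref{claim: generalized monotone mapping between two distributions} (inverse-CDF sampling $F^{-1}(G^{iu}(Y))$) to show that these are the extremizers of the relevant positive-part sums over all $F^i$ within sup-distance $\Delta$ of $G^i$, and then evaluates the objective at the extremizers to obtain $\Delta$. Your slicing argument bypasses all of that machinery: conditioning on $X_{-i}$ collapses the $n$-way Laguerre partition to a single cut $c(X_{-i})$ with exactly two live cells (agent $i$ above, the leader $j^*(X_{-i})$ below), so each $p_j(\Fvec,\lambda)-p_j(\Gvec,\lambda)$ is an integral of $D=F^i-G^i$ evaluated at one point per slice against a probability measure (for $i$) or sub-probability measures summing to one (for the $j\ne i$), giving $\ell^1$-distance $\le 2\Delta$ and hence the bound $\Delta$ directly. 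This is shorter, avoids having to guess-and-verify the extremal $F^i$, and, as you observe, transparently explains the extra factor of $n$ in Claim~\ref{claim: misreporting size gain abstract}. One minor point, shared with the paper's own treatment, is that the Laguerre cells are defined by strict inequalities, so with atomic marginals $\sum_j p_j$ may fall short of $1$ on tie events; a tie-breaking convention or a measure-zero-boundary assumption would make your first reduction airtight, though this is not a gap relative to the paper.
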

\begin{proof}
  We start with the distribution $\Gvec =  F^1 \cdots F^{i-1} \otimes G^i \otimes F^{i+1} \cdots F^n$ 
  and 
  replace $G^i$ with a distribution $F^i$ to construct $\Fvec=F^1\otimes F^2 \cdots \otimes F^n$. 
  We will construct $F^i$ in such a way that it is at most $\Delta$ away from $G^i$ and the 
  changes in the allocation proportions are maximized. Note that since 
  $\sum_i p_i(\Fvec, \lambda) = 1$ and $\sum_i p_i(\Gvec, \lambda) = 1$, we know that
  $$LHS(\Fvec)\coloneqq\sum_j (p_j(\Fvec, \lambda)-p_j(\Gvec, \lambda))^+ = \sum_j (p_j(\Gvec,
  \lambda)-p_j(\Fvec, \lambda))^+\coloneqq RHS(\Fvec) $$
  is always true for any $\Fvec, \Gvec, \lambda$. This means that we can focus
  on either maximizing either the LHS or the RHS of the above equation.  There
  are two types of $F^i$ that we can use. One is such that $p_i(\Fvec, \lambda)
  - p_i(\Gvec, \lambda)\geq 0$ and the other is $p_i(\Fvec, \lambda) -
  p_i(\Gvec, \lambda)< 0$. We can therefore bound the above quantity under these
  two scenarios separately:
  \begin{align}
    \max_{F^i: p_i(\Fvec, \lambda) - p_i(\Gvec, \lambda)\geq 0} RHS(\Fvec) \iff 
    \max_{F^i: p_i(\Fvec, \lambda) - p_i(\Gvec, \lambda)\geq 0} \sum_{j\neq i} (p_j(\Gvec,
    \lambda)-p_j(\Fvec, \lambda))^+\label{eq: claim misreporting size gain helper 1}\\
    \max_{F^i: p_i(\Fvec, \lambda) - p_i(\Gvec, \lambda)< 0} LHS(\Fvec) \iff 
    \max_{F^i: p_i(\Fvec, \lambda) - p_i(\Gvec, \lambda)< 0} \sum_{j\neq i} (p_j(\Fvec, \lambda) - p_j(\Gvec,
    \lambda))^+\label{eq: claim misreporting size gain helper 2}
  \end{align}
  Therefore for the rest of the proof we can focus on bounding the right hand
  side of \eqref{eq: claim misreporting size gain helper 1} and \eqref{eq: claim
  misreporting size gain helper 2}.  
  \syremoved{Note that since agents' valuations are independent, the allocation proportions for each receiver can be written in the following form:
  \begin{equation}
  p_i(\Fvec, \lambda) = \int_x \prod_{j\neq i} F^j(x + \lambda_i-\lambda_j)dF^i(x),
  \label{eq: winning prob agent i}
  \end{equation}
  \begin{equation}
  p_j(\Fvec, \lambda) = \int_x F^i(x)\prod_{k\not\in\{i,j\}} F^k(x +
  \lambda_j-\lambda_k)dF^j(x).
  \label{eq: winning prob agent j}
  \end{equation}
  }

  \paragraph*{Bounding the RHS of \eqref{eq: claim misreporting size gain helper 1}}
  Let $\tilde F(x)\coloneqq (G^i(x)-\Delta)^+ \,\forall x<\bar x, \tilde F(\bar
  x)\coloneqq 1$.  We claim that the $F^i$ that maximizes $\sum_{j\neq i}
  (p_j(\Gvec, \lambda)-p_j(\Fvec, \lambda))^+ $, \newshipra{while being at most $\Delta$ away}, is $\tilde F$. To see this,
  consider a different distribution $F'$ on the support $[0,\bar x]$ such that $\sup_x|F'(x) - G^i(x)|\leq \Delta$. We know that $F'(x) \geq \tilde F(x)$. 
  
  Later in Claim \ref{claim: generalized monotone mapping between two distributions}, we show that for any two distributions $G$ and $F$, we can sample $X\sim F$ using $Y$ sampled from $G$ by performing the following transformation:
  $$F^{-1}(G^u(Y))$$
  where $G^u$ is the random variable defined for distribution $G$ in \eqref{eq:
smoothed F CDF} and $F^{-1}\coloneqq \inf\{x\in \bbR : F(x)\geq p\}$ denotes the generalized inverse, sometimes also referred to as the quantile function. This is essentially the inverse CDF method applied to a general distribution (instead of a uniformly sampled variable).
  \syremoved{there exists a unique joint distribution $r$ with ($G$ and $F$ as its' marginals) such that the conditional distribution $r(\cdot | Y)$ has the following
\emph{monotonicity} property: define $\upperx_r(\cdot), \lowerx_r(\cdot)$ so that $X \in [\lowerx_r(Y), \upperx_r(Y)]$ almost surely, i.e.,
\begin{align*}
  \upperx_r(y) = \inf\{x: \bbP(X>x | Y=y) = 0\}\\
  \lowerx_r(y) = \sup\{x: \bbP(X < x| Y=y) = 0\},
\end{align*}
then
$$\upperx_r(y_1) \leq \lowerx_r(y_2) \quad\forall y_1<y_2.$$
In particular, the random variable $X|Y \sim r(\cdot|Y)$ can be sampled as 
$F^{-1}(G^u(Y))$, where $G^u$ is the random variable defined for distribution $G$ in \eqref{eq:
smoothed F CDF} and $F^{-1}\coloneqq \inf\{x\in \bbR : F(x)\geq p\}$ denotes the generalized inverse, sometimes also referred to as the quantile function. 
}
  In particular, let $G^{iu}$ be the following random function:
  \begin{equation*}
    G^{iu}(y) = 
    \begin{cases}
      G^i(y) &\text{ if } G^i(y) = G^i(y_-)\\
      \text{Uniform} [G^i(y_-), G^i(y)] \quad &\text{ if } G^i(y)> G^i(y_-)
    \end{cases}
\end{equation*}
  Now, denote by $\tilde \Fvec, \Fvec'$, the joint distribution that we get from $\Gvec$ on replacing $G^i$ with  $\tilde F$ and $F'$, respectively.
  Then, the winning probabilities for the agents in these two cases are:
  \begin{align}
  p_i(\tilde \Fvec, \lambda) = &\int_0^{\bar x} \prod_{j\neq i} F^j(\tilde x + \lambda_i-\lambda_j) d\tilde F(\tilde x)\nonumber\\
  =&\int_0^{\bar x} \bbE_{G^{iu}(x)}\left[ \prod_{j\neq i} F^j({\tilde F}^{-1}(G^{iu}(x)) + \lambda_i-\lambda_j)\right] dG^i(x),\label{eq: abstract misreporting claim helper11}\\
  p_j(\tilde \Fvec, \lambda) = &\int_x \tilde F(x+\lambda_j-\lambda_i)\prod_{k\not\in\{i,j\}} F^k(x + \lambda_j-\lambda_k)dF^j(x), \quad\forall j\neq i\label{eq: abstract misreporting claim helper12}
  \end{align}
  \begin{align}
  p_i(\Fvec', \lambda) = &\int_x \bbE_{G^{iu}(x)}\left[\prod_{j\neq i} F^j(F'^{-1}(G^{iu}(x)) + \lambda_i-\lambda_j)\right]dG^i(x),\label{eq: abstract misreporting claim helper21}\\
  p_j(\Fvec', \lambda) = &\int_x F'(x+\lambda_j-\lambda_i)\prod_{k\not\in\{i,j\}} F^k(x + \lambda_j-\lambda_k)dF^j(x), \quad\forall j\neq i\label{eq: abstract misreporting claim helper22}
  \end{align}
  Since $\tilde F(x) \leq F'(x) \forall x\in[0,\bar x]$ by construction, $\tilde F^{-1}(p) \geq F'^{-1}(p) \forall p\in[0,1]$. It's easy to see that $\eqref{eq: abstract misreporting claim helper11} \geq \eqref{eq: abstract misreporting claim helper21}$, and $\eqref{eq: abstract misreporting claim helper12}\leq
  \eqref{eq: abstract misreporting claim helper22}$. Using this we have
  \begin{align*}
      \sum_{j\neq i} (p_j(\Gvec, \lambda)-p_j(\tilde\Fvec, \lambda))^+ \geq
      &\sum_{j\neq i} (p_j(\Gvec, \lambda)-p_j(\Fvec', \lambda))^+.
  \end{align*}

and \newshipra{substituting $F'$ by $G^i$ and again using $\eqref{eq: abstract misreporting claim helper11} \geq \eqref{eq: abstract misreporting claim helper21}$, and $\eqref{eq: abstract misreporting claim helper12}\leq
  \eqref{eq: abstract misreporting claim helper22}$}, we get 
  \begin{align*}
      p_j(\Gvec, \lambda) - p_j(\tilde \Fvec, \lambda) &\geq 0 \quad\forall j\neq i,\\
      p_i(\Gvec, \lambda) - p_i(\tilde \Fvec, \lambda) &\leq 0
  \end{align*}
  This shows that $\tilde F$ is the maximizer of the RHS of \eqref{eq: claim
  misreporting size gain helper 1} \newshipra{among all distributions that are at most $\Delta$ away from $G^i$}. Using this we have
  \begin{align*}
    &\max_{F^i: p_i(\Fvec, \lambda) - p_i(\Gvec, \lambda)\geq 0} \sum_{j\neq i} (p_j(\Gvec,
    \lambda)-p_j(\Fvec, \lambda))^+\\
      =&\sum_{j\neq i} (p_j(\Gvec, \lambda)-p_j(\tilde\Fvec, \lambda))^+ \\
      =&p_i(\tilde \Fvec, \lambda) - p_i(\Gvec, \lambda)\\
      =&\int_0^{\bar x} \prod_{j\neq i} F^j\left(x + \lambda_i-\lambda_j\right)d\tilde F(x)- \int_0^{\bar x}\prod_{j\neq i} F^j(x+\lambda_i-\lambda_j)dG^i(x)\\
      =&\int_{x_\Delta}^{\bar x} \prod_{j\neq i} F^j\left(x + \lambda_i-\lambda_j\right)dG^i(x) + \prod_{j\neq i} F^j\left(\bar x + \lambda_i-\lambda_j\right)\Delta- \int_0^{\bar x}\prod_{j\neq i} F^j(x+\lambda_i-\lambda_j)dG^i(x)\\
      \leq&\Delta
  \end{align*}

  \paragraph{Bounding the RHS of \eqref{eq: claim misreporting size gain helper 2}} 
  Let $\hat F(x)\coloneqq  \min(G^i(x)+\Delta, 1)$.
  Then we can use the same steps as above for LHS to
  show that $\hat F(x)$ maximizes $\sum_{j\neq i} (p_j(\hat\Fvec, \lambda) -
  p_j(\Gvec, \lambda))^+$, and that 
  $$p_j(\Gvec, \lambda) - p_j(\hat\Fvec, \lambda)\leq
  0\,\forall j\neq i$$
  $$p_i(\Gvec, \lambda) - p_i(\hat \Fvec, \lambda)\geq 0$$
  This shows that $\hat F$ is the maximizer of the RHS of \eqref{eq: claim
  misreporting size gain helper 2}.From there, we have 
  \begin{align*}
    &\max_{F^i: p_i(\Fvec, \lambda) - p_i(\Gvec, \lambda)< 0} \sum_{j\neq i} (p_j(\Fvec, \lambda) - p_j(\Gvec,
    \lambda))^+\\
      =&\sum_j (p_j(\hat\Fvec, \lambda) - p_j(\Gvec, \lambda))^+ \\
      =&p_i(\Gvec, \lambda) - p_i(\hat \Fvec, \lambda) \\
      =&\int_0^{\bar x}\prod_{j\neq i} F^j(x+\lambda_i-\lambda_j)dG^i(x)- \int_0^{\bar x} \prod_{j\neq i} F^j\left(x + \lambda_i-\lambda_j\right)d\hat F(x)\\
      =&\int_0^{\bar x}\prod_{j\neq i} F^j(x+\lambda_i-\lambda_j)dG^i(x)- \Delta\prod_{j\neq i} F^j\left(\lambda_i-\lambda_j\right) -\int_0^{x^\Delta} \prod_{j\neq i} F^j\left(x + \lambda_i-\lambda_j\right)dF(x)\\
      =&\int_0^{\bar x}\prod_{j\neq i} F^j(x+\lambda_i-\lambda_j)dG^i(x)- \int_0^{x^\Delta} \prod_{j\neq i} F^j\left(x + \lambda_i-\lambda_j\right)dF(x)\\
      \leq&\Delta
  \end{align*}
  where $x^\Delta = F^{-1}(1-\Delta)$.
\end{proof}
Using Claim~\ref{claim: misreporting size gain abstract 0} we can now easily
prove the original Claim~\ref{claim: misreporting size gain abstract}.
\begin{proof}[Proof of Claim~\ref{claim: misreporting size gain abstract}]
  First we construct the following sequence of distributions where for any
  two adjacent distributions they only differ on one coordinate.
  \begin{align*}
  \Gvec_0 & = \Gvec = G^1\otimes\ldots\otimes G^n, \\
  \Gvec_1 & = F^1\otimes G^2\otimes\ldots\otimes G^n, \\
  \Gvec_2 & =  F^1\otimes F^2\otimes G^3\otimes\ldots \otimes G^n,\\
  \ldots&\\
   \Gvec_{n} & = \Fvec.
  \end{align*}
  Then we can decompose the difference between $p(\Fvec, \lambda)$ and $p^*$
  into a sum of differences:
  \begin{align*}
       ||p(\Fvec, \lambda) - p(\Gvec, \lambda)||_1 &=  ||p(\Gvec_{n}, \lambda) - p(\Gvec_0, \lambda)||_1 \\
      & \leq \sum_{i=1}^{n}||p(\Gvec_{i}, \lambda)- p(\Gvec_{i-1}, \lambda)||_1\\
      & \leq 2n\Delta 
  \end{align*}
  where the last step follows from Claim~\ref{claim: misreporting size gain abstract 0}.
  Since $\sum_j (p_j(\Fvec,
  \lambda)-p_j(\Gvec, \lambda))^+ = \sum_j (p_j(\Gvec, \lambda)-p_j(\Fvec,
  \lambda))^+ = \frac{1}{2}||p(\Fvec, \lambda) - p(\Gvec, \lambda)||_1$, we have
  $$
  \sum_j (p_j(\Fvec,
  \lambda)-p_j(\Gvec, \lambda))^+ = \sum_j (p_j(\Gvec, \lambda)-p_j(\Fvec,
  \lambda))^+ \leq n\Delta
  $$
\end{proof}

\subsubsection{Proof of Claim~\ref{claim: misreporting allocation difference}}
\begin{proof}
  WLOG, assume that $\lambda_1-\lambda'_1 \geq \lambda_2-\lambda'_2 \ldots\geq
  \lambda_n-\lambda'_n$.  Let $Q_{jk} = \Omega_j\cap \Omega'_k $ be the ``error
  flow'' of items from $j$ to $k$. It is easy to see that for $k<j$,
  $\Xvec_j+\lambda_j > \Xvec_k+\lambda_k \implies \Xvec_j+\lambda'_j > \Xvec_k+\lambda'_k$. Therefore 
  $Q_{jk} = \emptyset $ for $k<j$. Then it follows that
  $$\Omega'_j\setminus \Omega_j \subseteq \bigcup\limits_{i: i<j}Q_{ij}\subseteq\bigcup\limits_{i:i<j}\bigcup\limits_{k:k\geq j} Q_{ik}.$$
  The right hand side above is the net outflow from the set $\{i: i<j\}$. However,
  we know that  each individual agents' net in flow is $p_j(\Fvec,
  \lambda')-p_j(\Fvec, \lambda)$, so we can bound the RHS by
  $$
  \bbP\left(\bigcup\limits_{i:i<j}\bigcup\limits_{k:k\geq j} Q_{ik}\right)\leq \sum_{i:i<j}p_j(\Fvec, \lambda')-p_j(\Fvec, \lambda)\leq \Delta
  $$
  \end{proof}


\subsection{Proof of Theorem~\ref{thm: individual regret}}
\begin{proof}
  Fix an epoch $k$, let $\Delta = \sqrt{\frac{1}{2n(L_k-1)} \log(\frac{2}{\delta})}$,
  $\hat\lambda = \lambda^*(\hat\Fvec_{L_k-1})$.
  \begin{align}
  \text{(Lemma~\ref{lem: dkw})} \qquad &\sup_x |\hat F_{L_k-1}(x)-F(x)| \leq \Delta& \text{w.p. } 1-\delta/2\nonumber\\ 
  \text{(Lemma~\ref{lem: utility bound from learning error})}\implies&\bbE[u_i(\Xvec, \Xvec, \lambda^*)] - \bbE[u_i(\Xvec, \Xvec, \hat\lambda)] \leq n\Delta \bar x\quad\forall i &\text{w.p. } 1-\delta/2\label{eq: individual regret helper 1}\\ 
  (\text{Chernoff bound})\implies&\bbE[u_i(\Xvec, \Xvec, \lambda^*)](L_{k+1}-L_k) - \sum_{t=L_{k}}^{L_{k+1}-1}u_i(\Xvec_t, \Xvec_t, \hat\lambda)]&\nonumber\\ 
  \leq &n\Delta \bar x(L_{k+1}-L_k) + \bar x\sqrt{\frac{(L_{k+1}-L_k)}{2}\log(\frac{2}{\delta})} &\text{w.p. } 1-\delta\nonumber\\
  =&\sqrt{2^k n \log(\frac{2}{\delta})}\bar x +\sqrt{2^{k-1}\log(\frac{2}{\delta})}\bar x&\text{w.p. } 1-\delta\label{eq: individual regret helper 2}
  \end{align}
  The above bounds the regret in one epoch if the algorithm
  does not terminate before the epoch ends. 
  It remains to show that the algorithm with high probability does not terminate too early. This involves showing that with high probability, 
  no agent hits their capacity constraint $p^*_jT$ significantly earlier than $T$, and that 
  the detection algorithm does not falsely trigger. 

  Continuing from \eqref{eq: individual regret helper 2}, \newshipra{for any time step $T'\le T$, we have}
  
  \begin{align}
      &T\bbE\left[u_i(\Xvec, \Xvec, \lambda^*)\right] - \sum_{t=1}^{T'} u_i(\Xvec_t, \Xvec_t, \lambda_{k_t}) &\nonumber\\
      \leq & \sum_{k=0}^{\log_2 T'} \left[ (L_{k+1}-L_k)\bbE[u_i(\Xvec, \Xvec, \lambda^*)] - \sum_{t=L_{k}}^{L_{k+1}-1}u_i(\Xvec_t, \Xvec_t, \lambda_{k})] \right]\nonumber\\
      & + (T-T')\bbE\left[u_i(\Xvec, \Xvec, \lambda^*)\right]&\nonumber\\
      = & \sum_{k=1}^{\log_2 T} \sqrt{n2^{k} \log(\frac{2}{\delta})} \bar x + \sqrt{\frac{ 2^k}{2}\log(\frac{2}{\delta})}\bar x + (T-T')\bar x&\text{w.p } 1-\delta\log_2T\nonumber\\
      \leq & 2\sqrt{n \log(\frac{2}{\delta})} \bar x\sum_{k=1}^{\log_2 T} \sqrt{2^k} + (T-T')\bar x&\text{w.p } 1-\delta\log_2T\nonumber\\
      \leq & \frac{2\sqrt{2}}{\sqrt{2}-1}\sqrt{n T \log(\frac{2}{\delta})} \bar x + (T-T')\bar x&\text{w.p } 1-\delta\log_2T\label{eq: individual regret helper 3}
  \end{align}
  where the second inequality follows from\eqref{eq: individual regret helper 2}
  and union bound. \newshipra{Now, since there are at most $\log_2(T)$ epochs for any $T'\le T$, above holds for all epochs and therefore for all $T'$ with probability $1-\delta \log_2(T)$.}  Now we show that with high probability, for all  $T' \leq T -
  \frac{2\sqrt{2}}{\sqrt[]{2}-1}\sqrt[]{nT\log(\frac{2}{\delta})}$ and for any fixed agent $i$, the constraint of total allocation to agent $i$ to be less than $p^*_i T$ will be satisfied.  Note that a byproduct of applying Lemma~\ref{lem: utility bound
  from learning error} in \eqref{eq: individual regret helper 1}
  is that $|p_i(\Fvec, \lambda_k) - p^*_i|\leq n\Delta_{L_k-1}$ (See \eqref{eq:
  utility bound from learning error helper}). Fix a time step $\tau$,
  \begin{align*}
    &\sum_{t=1}^{\tau} \1[\argmax_j \Xvec_j+\lambda_{k_t j} = i]&\\
    (\text{Chernoff})\quad \leq &\sum_{k=1}^{\log_2 \tau} (L_{k+1}-L_k)p_i(\Fvec, \lambda_k) + \sqrt{\frac{(L_{k+1}-L_k)}{2}\log(\frac{2}{\delta})}&\text{w.p. } 1-\delta \log_2\tau\\
    \leq &\sum_{k=1}^{\log_2 \tau} (L_{k+1}-L_k)(p^*_i + n\Delta_{L_k-1})+ \sqrt{\frac{(L_{k+1}-L_k)}{2}\log(\frac{2}{\delta})}&\text{w.p. } 1-\delta \log_2\tau\\
    (L_k=2^k)\quad\leq &p^*_i\tau + \sum_{k=1}^{\log_2 \tau}  \left(\sqrt{n2^k\log(\frac{2}{\delta})}+ \sqrt{2^{k-1} \log(\frac{2}{\delta})}\right)&\text{w.p. } 1-\delta \log_2\tau\\
    \leq &p^*_i\tau + \frac{2\sqrt{2}}{\sqrt[]{2}-1}\sqrt[]{n\tau\log(\frac{2}{\delta})} &\text{w.p. } 1-\delta \log_2\tau
  \end{align*}

  This means that for all $\tau \leq T -
  \frac{2\sqrt{2}}{\sqrt[]{2}-1}\sqrt[]{nT\log(\frac{2}{\delta})} $, with
  probability $1-\delta\log_2 T$, 
  $$\sum_{t=1}^{\tau} \1[\argmax_j \Xvec_j+\lambda_{k_t j} =
  i]\leq p^*_iT,$$

  Combining above with \eqref{eq: individual
  regret helper 3}, we have that with probability
  $1-2\delta\log_2T$, for any fixed $i$,  if the algorithm terminates at $T'$ due to allocation limit reached for agent $i$, then $T'\ge \frac{2\sqrt{2}}{\sqrt[]{2}-1}\sqrt[]{nT\log(\frac{2}{\delta})}$, so that 
  $$
  T\bbE\left[u_i(\Xvec, \Xvec, \lambda^*)\right] - \sum_{t=1}^{T'} u_i(\Xvec, \Xvec, \lambda_{k_t}) \leq 
  \frac{2\sqrt{2}}{\sqrt{2}-1}\sqrt{n T \log(\frac{2}{\delta})} \bar x + \frac{2\sqrt{2}}{\sqrt[]{2}-1}\sqrt[]{nT\log(\frac{2}{\delta})}\bar x
  $$

  Finally, we also have to bound the probability that the detection algorithm
  falsely triggers. For a given time $t$ and for each $i$, let 
  \begin{align*}
  F^i_t(x) = &\frac{1}{t}\sum_{t=1}^t \1[X_{i,t}\leq x]\\
  \tilde F_t(x) = &\frac{1}{t(n-1)}\sum_{t=1}^t \sum_{j\neq i}\1[\Xvec_{j,t}\leq x]
  \end{align*}
  be the empirical CDF for agent $i$ and the rest of the agents. Since all
  agents are truthful, using Lemma~\ref{lem: dkw} we have that with probability
  $1-\delta$,
  \begin{align*}
    \sup_x|F^i_t(x)-F(x)|\leq & \sqrt{\frac{1}{2t}\log(\frac{2}{\delta})}\\
    \sup_x|\tilde F_t(x)-F(x)| \leq & \sqrt{\frac{1}{2t(n-1)}\log(\frac{2}{\delta})}
  \end{align*}
  This means that
  $\sup_x|F^i_t(x)-\tilde F_t(x)|\leq
  \sqrt{\frac{1}{t}\log(\frac{2}{\delta})}\leq
  32\sqrt{\frac{1}{t}\log(\frac{256et}{\delta})} = \Delta_t/2$, which
  means that Algorithm~\ref{algo: detection} is not triggered by agent $i$.
  Using union bound, we know that with probability $1-\delta nT$, the algorithm
  will not end early because of a false trigger (by any agent).

  The result follows by replacing $\delta$ with $\frac{\delta}{n(2\log_2T+T)}$
  and take the union bound over all agents.

\end{proof}

\section{Proof of Theorem~\ref{thm: main BIC theorem}}
\label{sec: proof of main bic theorem}
\subsection{Proof of Lemma~\ref{lem: reported distributions have to be close to true distribution}}
\begin{proof}
  Let $\alpha=\frac{\Delta}{4}$. We first check that the given condition on
  $\Delta$ satisfies $ \left(\frac{128et}{\alpha}\right)e^{-t\alpha^2/128} \leq
  \frac{\delta}{2}$ and that $2e^{-2t(n-1)\alpha^2}\leq \frac{\delta}{2}$

  \begin{align*}
          &\left(\frac{128et}{\alpha}\right)e^{-t\alpha^2/128} \leq
      \frac{\delta}{2}\\
      \iff& \alpha^2\geq \frac{128 \log(\frac{256 et}{\delta})}{t} + \frac{64}{t}\log(\frac{1}{\alpha^2})\\
      \impliedby& \alpha^2\geq \frac{256 \log(\frac{256 et}{\delta})}{t} \\
      \iff& \Delta\geq 64\sqrt{\frac{ \log(\frac{256 et}{\delta})}{t}} \\
  \end{align*}
  \begin{align*}
      &2e^{-2t(n-1)\alpha^2}\leq \frac{\delta}{2} \\
      \iff &\alpha\geq \sqrt{\frac{1}{2t(n-1)}\log(\frac{4}{\delta})}\\
      \impliedby& \Delta\geq 64\sqrt{\frac{ \log(\frac{256 et}{\delta})}{t}} \\
  \end{align*}

  Let $\bar F_t(x) = \frac{1}{t} \sum_{s=1}^t \1[\tilde X_{i,s} \leq x]$ be the
  empirical CDF of the samples collected from agent $i$. Let 
  $\tilde F_t(x) = \frac{1}{(n-1)t} \sum_{s=1}^t\sum_{j\neq i} \1[\tilde X_{j,s} \leq x]$ be the
  empirical CDF of all reported values from the other agents.  
  Let $\bar F(x) = \frac{1}{t}\sum_{s=1}^tF_s(x)$, where $F_s(x) = \bbP( \tilde X_{i, s} \leq x | \cH_s)$.
  Lemma~\ref{lem: martingale uniform convergence} tells us that with probability $1-\delta/2$,
  \begin{equation}
    \label{eq: lem reported distributions have to be close helper}
    \sup_x |\bar F_t(x) - \bar F(x) |\leq \frac{\Delta}{4}
  \end{equation}
  Since other agents are truthful, their reported values are independent, and we can use the regular DKW inequality to bound the empirical distribution constructed from their values. 
  Using Lemma~\ref{lem: dkw} we can show that with probability $1-\delta/2$,
  $$
  \sup_x |\tilde F_t(x) - F(x) |\leq \frac{\Delta}{4}.
  $$
  Using union bound, we can conclude that \newshipra{if $\sup_x |\bar F(x) - F(x)| \ge \Delta$, then} with probability $1-\delta$:
  $$\sup_x |\tilde F_t(x) - \bar F_t(x) |> \frac{\Delta}{2}$$
  which means that Algorithm~\ref{algo: detection} would have returned Reject.
\end{proof}

\subsection{Proof of Lemma~\ref{lem: utility gain bound given fixed allocation policy}}
\label{sec: proof of lem utility gain bound given fixed allocation policy}
First we state a technical result on monotone mapping between two distributions.
Given a cumulative distribution function $F$, we define the following random function:
\begin{equation}
    F^u(y) = 
    \begin{cases}
      F(y) &\text{ if } F(y) = F(y_-)\\
      \text{Uniform} [F(y_-), F(y)] \quad &\text{ if } F(y)> F(y_-)
    \end{cases}
    \label{eq: smoothed F CDF}
\end{equation}
If $F$ is a continuous distribution then $F^u$ is deterministic and is the
same as $F$. However if $F$ contains point masses, then at points where $F$
jumps, $F^u$ is uniformly sampled from the interval of that jump. It is
easy to see that $F^u$ has the nice property that if $Y\sim F$, then
$F^u(Y)\sim \text{Uniform}[0,1]$. 
\begin{restatable}{claim}{monotonecoupling}
Let $G$ be any distribution (cdf) over $\cX\subseteq \bbR$, and $F$ over $\cY\subseteq
\bbR$. Then there exists a unique joint distribution $r$ over $\cX\times \cY$ with
marginals $G,F$ such that the conditional distribution $r(\cdot | Y)$ has the following
\emph{monotonicity} property: define $\upperx_r(\cdot), \lowerx_r(\cdot)$ so that $X \in [\lowerx_r(Y), \upperx_r(Y)]$ almost surely, i.e.,
\begin{align*}
  \upperx_r(y) = \inf\{x: \bbP(X>x | Y=y) = 0\}\\
  \lowerx_r(y) = \sup\{x: \bbP(X < x| Y=y) = 0\},
\end{align*}
then

$$\upperx_r(y_1) \leq \lowerx_r(y_2) \quad\forall y_1<y_2.$$
In particular, the random variable $X|Y \sim r(\cdot|Y)$ can be sampled as 
$G^{-1}(F^u(Y))$, where $F^u$ is the random function defined in \eqref{eq:
smoothed F CDF} and $G^{-1}\coloneqq \inf\{x\in \bbR : G(x)\geq p\}$ denotes the generalized inverse, sometimes also referred to as the quantile function. 
\label{claim: generalized monotone mapping between two distributions}
\end{restatable}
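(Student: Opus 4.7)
The plan is to establish existence by an explicit construction---precisely the one given in the ``in particular'' part of the claim---and then show that the monotonicity property combined with the marginal constraints forces this construction to be unique (in distribution).

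For existence, I would take the joint law $r$ defined by sampling $Y\sim F$ and setting $X = G^{-1}(F^u(Y))$, where the uniform randomness in $F^u$ at jump points is drawn conditionally independently given $Y$. Three things need verification. The marginal of $Y$ is $F$ by construction. For the marginal of $X$, I would prove the standard fact that $F^u(Y)\sim \text{Uniform}[0,1]$ when $Y\sim F$: at continuity points of $F$ this is the probability integral transform, while at each jump point $y$, the atom of mass $F(y)-F(y^-)$ is by construction spread uniformly over $[F(y^-),F(y)]$, which matches the density of the uniform distribution on that slice. Then $G^{-1}(U)\sim G$ for $U\sim\text{Uniform}[0,1]$ by the inverse CDF method. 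For the monotonicity property, for $y_1<y_2$ I would observe that $F^u(y_1)\le F(y_1)\le F(y_2^-)\le F^u(y_2)$ almost surely, and apply monotonicity of $G^{-1}$ to conclude $\upperx_r(y_1)\le \lowerx_r(y_2)$.

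For uniqueness, let $r'$ be any coupling with the prescribed marginals and monotonicity property. I would first pin down $\upperx_{r'}$ pointwise: monotonicity implies (up to null sets) $\{Y\le y_0\}\subseteq\{X\le \upperx_{r'}(y_0)\}$ and $\{Y>y_0\}\subseteq\{X>\upperx_{r'}(y_0)-\epsilon\}$ for all $\epsilon>0$, yielding the sandwich $G(\upperx_{r'}(y_0)^-)\le F(y_0)\le G(\upperx_{r'}(y_0))$, which characterizes $\upperx_{r'}(y_0)=G^{-1}(F(y_0))$; symmetrically $\lowerx_{r'}(y_0)=G^{-1}(F(y_0^-)^+)$. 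On any nondegenerate interval $[\lowerx_{r'}(y_0),\upperx_{r'}(y_0)]$---which can arise only when $F$ has a jump at $y_0$---the $X$-marginal mass in this interval equals $F(y_0)-F(y_0^-)$, and the monotonicity property applied on either side of $y_0$ forces the conditional law $r'(\cdot\mid Y=y_0)$ to be precisely the $G$-mass restricted and renormalized to this interval, i.e., the law of $G^{-1}(U)$ for $U\sim\text{Uniform}[F(y_0^-),F(y_0)]$. This matches the constructed $r$ on every fiber.

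The main obstacle I expect is the bookkeeping required to handle point masses in both $F$ and $G$ simultaneously at the same or nearby points: generalized inverses and right/left limits must be tracked carefully to avoid off-by-null-set errors, and one must be precise about what ``uniqueness'' means (equality in distribution, with all inclusions and inequalities holding almost surely rather than everywhere). Once these technicalities are handled, existence follows directly from the inverse CDF method applied via $F^u$, and uniqueness is forced by the comonotonic structure that the monotonicity property encodes---essentially the well-known characterization of the upper Fr\'{e}chet--Hoeffding coupling.
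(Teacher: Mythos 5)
Your existence construction matches the paper's exactly: sample $X=G^{-1}(F^u(Y))$, verify the $X$-marginal by showing $F^u(Y)\sim\text{Uniform}[0,1]$ (atoms of $F$ are smeared uniformly over their jump intervals), and derive monotonicity from the almost-sure chain $F^u(y_1)\le F(y_1)\le F^u(y_2)$ for $y_1<y_2$ together with monotonicity of $G^{-1}$. Your uniqueness argument, however, takes a genuinely different route. You pin down the endpoint maps fiber by fiber---deriving $\upperx_{r'}(y_0)=G^{-1}(F(y_0))$ from the sandwich $G(\upperx_{r'}(y_0)^-)\le F(y_0)\le G(\upperx_{r'}(y_0))$, and then arguing that the conditional law on each fiber is forced to be the renormalized $G$-mass on $[\lowerx_{r'}(y_0),\upperx_{r'}(y_0)]$. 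The paper instead determines the joint CDF directly: for any $(x,y)$, monotonicity forces $\bbP_r(X\le x,Y\le y)$ to equal $G(x)$ when $x<\upperx_r(y)$ and $F(y)$ when $x\ge\upperx_r(y)$, which (after noting the inclusions $\{X\le x\}\subseteq\{Y\le y\}$ or vice versa) gives $\bbP_r(X\le x,Y\le y)=\min(G(x),F(y))$ for every monotone $r$, so all such couplings share the same joint CDF. The paper's argument is shorter and sidesteps precisely the bookkeeping you flag as the main obstacle: when $G$ has a $G$-null flat stretch at level $F(y_0)$, the map $\upperx_{r'}$ itself is not uniquely determined pointwise (it can land anywhere on that null interval), so the pointwise endpoint maps are not quite the right invariant; the joint CDF is. Your fiberwise approach is salvageable, but it requires explicitly quotienting out those null ambiguities, whereas the paper's joint-CDF argument never has to resolve them.
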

The proof of this Claim is in Appendix~\ref{sec: proof of claim generalized monotone mapping between two distributions}. Using the above result, we derive the following key result that will provide insight into a strategic agent's best response to a greedy allocation strategy. Note that given a particular marginal distribution $G$ for the agent $i$'s reported values and the true value distribution $F$, there are many potential joint distributions between the true  and reported valuations. 
In the following lemma, we show that the "best" joint distribution among these, in terms of agent $i$'s utility maximization, is the one characterized in Claim \ref{claim: generalized monotone mapping between two distributions}.
\begin{claim}
\label{claim: best reporting function}
Fix a greedy allocation policy $\lambda$. Let $\Xvec \in [0,\bar x]^n$ be drawn from $F\otimes\ldots\otimes F$. 
Fix another distribution $G$ over $[0,\bar x]$. 
Given $\Xvec$, define $\tilde \Xvec^*$ as follows: let $\tilde X_{i}^*=G^{-1}(F^u(X_{i}))$, and $\tilde X_{j}^* = X_{j} \,\forall j\neq i$.
Let $\cR$ be the set of all joint distributions over $[0,\bar x]^2$ such that the marginals are $F$ and $G$; \newshipra{and for any $r\in \cR$, given $\Xvec$ define $\tilde \Xvec^r$ as follows: $\tilde X^r_i \sim r(\cdot|X_i)$, and $\tilde X^r_{j}=X_{j} \,\forall j\neq i$.}
Then
$$\bbE[u_i(\tilde \Xvec^*, \Xvec, \lambda)] \geq \max_{r\in \cR} \bbE[u_i(\tilde \Xvec^r, \Xvec, \lambda)].$$
\end{claim}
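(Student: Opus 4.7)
The plan is to condition on the values of the other agents $X_j, j\ne i$ (which by definition coincide under both $\tilde \Xvec^r$ and $\tilde \Xvec^*$), and to reduce the problem to a one-dimensional rearrangement inequality on the coupling of $(X_i,\tilde X_i)$. Concretely, set
$$m := \max_{j \ne i}(X_j + \lambda_j) - \lambda_i;$$
then agent $i$ wins iff $\tilde X_i > m$, so conditional on $m$ the expected utility is
$$\bbE\bigl[X_i\,\1[\tilde X_i > m] \,\bigm|\, m\bigr],$$
which depends only on the joint distribution of $(X_i,\tilde X_i)$. It therefore suffices to show that among all couplings $r\in\cR$ of the marginals $F,G$, the monotone coupling $\tilde X_i^* = G^{-1}(F^u(X_i))$ maximizes $\bbE[X_i\,\1[\tilde X_i > m]]$ for every fixed $m$.

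First, the marginal constraint $\tilde X_i \sim G$ forces $\bbP(\tilde X_i^r > m) = 1 - G(m)$ for every $r\in\cR$, so the total ``winning mass'' is fixed. Writing $h_r(x) := \bbP(\tilde X_i^r > m \mid X_i = x)$, the one-dimensional problem becomes
$$\max_{h:\,0\le h\le 1,\ \int h\, dF \,=\, 1-G(m)}\ \int x\, h(x)\, dF(x),$$
a fractional-knapsack problem whose optimum concentrates $h \equiv 1$ on the largest values of $x$.

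Next I would verify that the monotone coupling realizes this optimum. By Claim~\ref{claim: generalized monotone mapping between two distributions}, $F^u(X_i)\sim\mathrm{Uniform}[0,1]$, and $\{G^{-1}(F^u(X_i)) > m\} = \{F^u(X_i) > G(m)\}$ coincides almost surely with the event that $X_i$ lies in the top $1-G(m)$ quantile of $F$. Hence $\tilde X_i^*$ places the full winning mass on the largest realizations of $X_i$, matching the knapsack maximizer.

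The main obstacle is handling atoms of $F$ or $G$, where no single clean threshold on $X_i$ separates winners from losers. The uniform smoothing $F^u$ is exactly what repairs this: it splits each atom in the right proportion so that $\bbP(F^u(X_i)>G(m)) = 1-G(m)$ holds identically. I would formalize the rearrangement step by comparing the events $A := \{\tilde X_i^r > m,\ \tilde X_i^* \le m\}$ and $B := \{\tilde X_i^r \le m,\ \tilde X_i^* > m\}$, which satisfy $\bbP(A)=\bbP(B)$ by marginal matching. Letting $x^*$ denote the threshold implicitly defined by the level $G(m)$ of the quantile function (with the randomization from $F^u$ absorbed into the definition), on $B$ one has $X_i \ge x^*$ almost surely and on $A$ one has $X_i \le x^*$ almost surely, whence
$$\bbE[X_i\,\1_A] \le x^*\,\bbP(A) = x^*\,\bbP(B) \le \bbE[X_i\,\1_B].$$
This gives $\bbE[X_i\,\1[\tilde X_i^r>m]] \le \bbE[X_i\,\1[\tilde X_i^*>m]]$ for every $m$, and averaging over the other agents' values (i.e.\ over $m$) completes the proof.
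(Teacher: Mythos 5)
The proposal is correct, and it takes a genuinely different route from the paper's proof. The paper expands the expected utility in terms of the product $\prod_{j\ne i} F(\tilde x + \lambda_i - \lambda_j)$, applies a local pairwise swap argument (using the elementary inequality that for $a<b$, $c<d$ one has $ac+bd > ad+bc$) to show that any non-monotone coupling can be improved, and then invokes Claim~\ref{claim: generalized monotone mapping between two distributions} for uniqueness of the monotone coupling to identify the optimizer with $\tilde\Xvec^*$. You instead condition on the threshold $m=\max_{j\ne i}(X_j+\lambda_j)-\lambda_i$, reduce to the one-dimensional statement that the monotone coupling maximizes $\bbE[X_i\,\1[\tilde X_i>m]]$ for each fixed $m$, and prove it by a global comparison via the overshoot/undershoot events $A$ and $B$, which have equal probability by the marginal constraint. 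This buys something concrete: the paper's swap argument is a local improvement step and tacitly presumes an optimal coupling exists and can be reached (a compactness/convergence point the paper does not address), whereas your direct comparison of $\tilde\Xvec^*$ against an arbitrary $r\in\cR$ sidesteps both the existence issue and the appeal to uniqueness, needing only the fact $F^u(X_i)\sim\mathrm{Uniform}[0,1]$. The one place where your write-up is imprecise is the claim that on $B$ one has $X_i\ge x^*$ a.s. and on $A$ one has $X_i\le x^*$ a.s.\ ``with the randomization from $F^u$ absorbed into the definition'' of $x^*$: with $x^*=F^{-1}(G(m))$ deterministic, the inclusion on $B$ is exact (since $F^u(X_i)>G(m)$ forces $F(X_i)>G(m)$), but on $A$ the event $\{X_i>x^*,\,F^u(X_i)\le G(m)\}$ must be shown to have measure zero — which it does, because $X_i>x^*$ forces $F(X_i^-)\ge G(m)$ and hence $F^u(X_i)=G(m)=F(X_i^-)$, an event of $F$-measure (and conditional uniform-measure) zero. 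Spelling out that short measure-zero argument would make the proof airtight.
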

\begin{proof}
First we show that for any joint distribution that is not monotone \newshipra{(i.e., does not have the monotonicity property defined in Claim \ref{claim: generalized monotone mapping between two distributions})}, there is a monotone one that obtains at least as much utility.
Suppose $r$ is one such joint distribution that is not monotone, i.e.,
$\exists x_1<x_2, $ s.t. $ \upperx_r(x_1) > \lowerx_r(x_2)$ (as defined in Claim~\ref{claim: generalized monotone mapping between two distributions}). First recall that since  
\newshipra{$X_j\sim F, \forall j$ are independent,} the expected utility can be written as the following:
\begin{align*}
    \bbE[u_i(\tilde \Xvec^r, \Xvec, \lambda)]
    =\int_0^{\bar x} \int_{\lowerx_r(x)}^{\upperx_r(x)} x\prod_{j\neq i} F(\tilde x + \lambda_i-\lambda_j) dr(\tilde x |x) dF(x)
\end{align*}

Now consider a pair of values $\tilde x_1 >\tilde x_2$  such that $(\tilde x_1, x_1)$ and $(\tilde x_2, x_2)$  \newshipra{ has a non-zero probability density under distribution $r$} 
 . This pair exists because $\upperx_r(x_1) > \lowerx_r(x_2)$. Then using the fact that for
$a,b,c,d>0, a<b, c<d: ac+bd > ad+bc$, we can see that:
$$
x_1\prod F(\tilde x_1 + \lambda_i-\lambda_j) + x_2\prod F(\tilde x_2 + \lambda_i-\lambda_j) <
x_1\prod F(\tilde x_2 + \lambda_i-\lambda_j) + x_2\prod F(\tilde x_1 + \lambda_i-\lambda_j)
$$
This means that if we exchanged the probability mass between the two conditionals of $x_1, x_2$, the utility would be at least as much as
before, if not higher. This means that at least one monotone joint distribution belongs in the 
set of utility maximizing joint distributions. 
Since Claim~\ref{claim: generalized monotone mapping between two distributions} showed that the distribution of ($G^{-1}(F^u(X)), X)$ is the unique joint distribution that is monotone, we conclude that $\tilde \Xvec^*$  as defined in the lemma statement is indeed utility maximizing.  
\end{proof}
\paragraph*{Proof of Lemma~\ref{lem: utility gain bound given fixed allocation policy}}
\begin{proof}
  Let $G(x) \coloneqq (F(x)-\Delta)^+ \forall x < \bar x$,
  $ G(\bar x) \coloneqq 1$ be the distribution whose CDF is shifted down
  from $F$ by $\Delta$. Let $\tilde r$ be the utility maximizing joint
  distribution from Claim~\ref{claim: best reporting function}. Let $\hat r$, $\hat F$ be a different pair of joint and marginal distribution
  such that $\sup_x |F(x) - \hat F(x)| \leq \Delta$. We know
  that $\hat F(x)\geq G(x)$ for all $x$. Agent $i$'s utilities for using
  $\hat r$ and $\tilde r$ respectively, are:
\begin{align}
    \bbE_{\hat r}[u_
    i(\hat \Xvec, \Xvec, \lambda)] = &\int_0^{\bar x}x\int_{\lowerx_{\hat r}(x)}^{\upperx_{\hat r}(x)} \prod\limits_{j\neq i} F(\hat x + \lambda_i-\lambda_j) d\hat r(\hat x|x)dF(x)\nonumber\\
    =&\int_0^{\bar x} x\bbE_{F^u(x)}\left[\prod\limits_{j\neq i} F\left(\hat F^{-1}(F^u(x)) + \lambda_i-\lambda_j\right)\right] dF(x) \label{eq: alternative utility}
\end{align}
and 
\begin{align}
    \bbE_{\tilde r}[u_i(\tilde \Xvec, \Xvec, \lambda)] 
    =&\int x\bbE_{F^u(x)}\left[\prod\limits_{j\neq i} F\left( G^{-1}(F^u(x)) + \lambda_i-\lambda_j\right) \right]dF(x) \label{eq: delta below utility}
\end{align}
respectively.
Since $\hat F(x)\geq G(x)$, we know $\hat F^{-1}(p)\leq G^{-1}(p)$. Clearly $\eqref{eq: alternative utility}\leq \eqref{eq: delta below utility}$. \newshipra{We conclude that given a  greedy allocation policy $\lambda$, true valuation $X_{i,t}$ and truthful agents $j\ne i$ (with $\tilde X_{j,t} = X_{j,t}$),  reporting $\tilde X_{i,t} \sim \tilde r(\cdot|X_{i,t})$ is a strategy for agent $i$ that maximizes $\bbE[u_i(\tilde \Xvec_t, \Xvec_t, \lambda)]$ subject to the marginal distribution constraint $\sup_x |F(x) - F_r(x)| \leq \Delta$.
That is,
$$ \bbE_r[u_i(\tilde \Xvec_t, \Xvec_t, \lambda) ]  \le  \bbE_{\tilde r}[u_i(\tilde \Xvec_t, \Xvec_t, \lambda) ] \quad\forall r \text{ s.t. } \sup_x|F_r(x) - F(x)|\leq \Delta$$
}
It remains to bound the difference
$\bbE_{\tilde r}[u_i(\tilde \Xvec, \Xvec, \lambda)] - \bbE[u_i(\Xvec, \Xvec, \lambda)]$. First note that $G^{-1}(p) = F^{-1}(p+\Delta)$. Then we have that
\begin{align}
    &\bbE_r[u_i(\tilde \Xvec, \Xvec, \lambda)] - \bbE[u_i(\Xvec, \Xvec, \lambda)] \label{eq: profit from strategizing}\\
    =&\int_0^{\bar x} x\left(\bbE_{F^u(x)}\left[\prod_{j\neq i} F\left(F^{-1}(F^u(x)+\Delta) + \lambda_i-\lambda_j\right)\right] - \prod_{j\neq i} F(x+\lambda_i-\lambda_j)\right)dF(x)\nonumber\\
    \leq &\bar x\int_0^{\bar x} \left(\bbE_{F^u(x)}\left[\prod_{j\neq i} F\left(F^{-1}( F^u(x)+\Delta) + \lambda_i-\lambda_j\right)\right] - \prod_{j\neq i} F(x+\lambda_i-\lambda_j)\right)dF(x)\label{eq: profit from strategizing 2}
\end{align}
where the inequality follows from the fact that $F^{-1}(F^u(x)+\Delta)\geq x$ $w.p.1$ for
all $x$. To bound the remaining expression in the integral, we can use the fact that \newshipra{since the marginal distribution of $\tilde x$ under the joint distribution $\tilde r(\tilde x, x)$ is $G$, we have }
\begin{align}
    &\int_0^{\bar x}\int_0^{\bar x} \prod_{j\neq i} F\left(\tilde x + \lambda_i-\lambda_j\right) d\tilde r(\tilde x|x)dF(x)\nonumber\\
    =&\int_0^{\bar x} \prod_{j\neq i} F\left(x + \lambda_i-\lambda_j\right)dG(x)\nonumber\\
    =&\int_{x_\Delta}^{\bar x} \prod_{j\neq i} F\left(x + \lambda_i-\lambda_j\right)dF(x) + \prod_{j\neq i} F\left(\bar x + \lambda_i-\lambda_j\right)\Delta\nonumber\\
    \leq&\int_{x_\Delta}^{\bar x} \prod_{j\neq i} F\left(x + \lambda_i-\lambda_j\right)dF(x) + \Delta\label{eq: profit bound helper 1}
\end{align}
where $x_{\Delta} := F^{-1}(\Delta)$. Similarly,
\begin{align}
    &\int_0^{\bar x} \prod_{j\neq i} F\left(x + \lambda_i-\lambda_j\right) dF(x)\nonumber\\
    =&\int_{0}^{x_{\Delta}} \prod_{j\neq i} F\left(x + \lambda_i-\lambda_j\right)dF(x) + \int_{x_{\Delta}}^{\bar x} \prod_{j\neq i} F\left(x + \lambda_i-\lambda_j\right)dF(x)\nonumber\\
    \geq&\int_{x_\Delta}^{\bar x} \prod_{j\neq i} F\left(x + \lambda_i-\lambda_j\right)dF(x)\label{eq: profit bound helper 2}
\end{align}

Plugging \eqref{eq: profit bound helper 1} and \eqref{eq: profit bound helper 2}
back to \eqref{eq: profit from strategizing 2}, we can now bound the expression in  \eqref{eq: profit from strategizing}, and thereby the profit from strategizing, by $\bar x\Delta$.

\end{proof}
\subsection{Proof of Lemma~\ref{lem: learning under strategic report}}
\begin{proof}
  Let $\bar F$ be the average distribution that agent $i$
  reported from up to round $T'$: $\bar F = \frac{1}{T'} \sum_{t=1}^{T'}F_t$,
  where $F_t$ is the reported value distribution of agent $i$ in time $t$: $ F_t(x)\coloneqq \bbP(\tilde X_{i,t} \leq x | \cH_t)$.
  Since the the detection algorithm has not been triggered, 
  we can conclude using Lemma~\ref{lem: reported distributions have to be close
  to true distribution} that with probability $1-\delta$, 
  \begin{align*}
    &\sup_x|\bar F(x) - F(x)| < \Delta \coloneqq 64\sqrt{\frac{\log(\frac{256eT'}{\delta})}{T'}},\\
    \text{and}\quad &\sup_x|\bar F_{T'}(x) - \bar F(x)| < \frac{\Delta}{4} = 16\sqrt{\frac{\log(\frac{256eT'}{\delta})}{T'}}.
  \end{align*}
  The second inequality holds because the proof of Lemma~\ref{lem: reported
  distributions have to be close to true distribution} uses the second
  inequality to show the first (see Equation~\ref{eq: lem reported distributions have
  to be close helper}).
  Combining the above two steps, we have 
  \begin{equation}
    \label{eq: single strategic agent utility bound helper 1}
    \sup_x|\bar F_{T'}(x) - F(x)| < \frac{\Delta}{4} = 80\sqrt{\frac{\log(\frac{256eT'}{\delta})}{T'}} \qquad\text{w.p. }1-\delta.
  \end{equation}
  This shows that if the detection algorithm has not been triggered, the
  empirical CDF of strategic agent's reported values are close to the true CDF.
  Let $\tilde F_{T'}(x) = \frac{1}{(n-1)T'} \sum_{t=1}^{T'}\sum_{j\neq i} \1[X_{j,t} \leq x]$ be
  the emipircal distriution from all agents other than $i$. We know from
  Lemma~\ref{lem: dkw} that
  \begin{equation}
    \label{eq: single strategic agent utility bound helper 2}
    \sup_x |\tilde F_{T'}(x) - F(x)| \leq
    \sqrt{\frac{1}{2(n-1)T'}\log(\frac{2}{\delta})}\qquad \text{w.p. } 1-\delta.
  \end{equation}
  Combining \eqref{eq: single strategic agent utility bound helper 1} and
  \eqref{eq: single strategic agent utility bound helper 2}, we can now bound the
  error in the combined estimation,
  $\hat F_{T'} =  \frac{1}{nT'} \sum_{t=1}^{T'}\sum_{j=1}^n \1[\Xvec^t_j \leq x] $:
  \begin{align}
    &\sup_x |\hat F_{T'}(x) - F(x)|&\nonumber\\
    =&\sup_x |\frac{1}{n} \bar F_{T'}(x) + \frac{n-1}{n}\tilde F_{T'}(x) - F(x)|&\nonumber\\
    =&\sup_x |\frac{1}{n} \bar F_{T'}(x) - \frac{1}{n}F(x) + \frac{n-1}{n}\tilde F_{T'}(x) - \frac{n-1}{n}F(x)|&\nonumber\\
    \leq&\sup_x |\frac{1}{n} \bar F_{T'}(x) - \frac{1}{n}F(x)| + \sup_x|\frac{n-1}{n}\tilde F_{T'}(x) - \frac{n-1}{n}F(x)|&\nonumber\\
    \leq&80\sqrt{\frac{\log(\frac{256eT'}{\delta})}{nT'}} + \sqrt{\frac{1}{2nT'}\log(\frac{2}{\delta})} &\text{w.p. }1-2\delta\nonumber\\
    \leq&81\sqrt{\frac{\log(\frac{256eT'}{\delta})}{nT'}}&\text{w.p. }1-2\delta
    \label{eq: single strategic agent utility bound helper 3}
  \end{align}

  Let $\hat \Fvec_{T'} = \hat F_{T'}\otimes\ldots\otimes \hat F_{T'}$, and $\lambda =
  \lambda^*(\hat \Fvec_{T'})$, and $\Delta_{T'}= 81\sqrt{\frac{\log(\frac{256eT'}{\delta})}{nT'}}$. 
  Applying Lemma~\ref{lem: utility bound from learning error} to \eqref{eq: single strategic agent utility bound helper 3} we have
  \begin{align}
   &\sup_x |\hat F_{T'}(x)-F(x)| \leq \Delta_{T'}\quad \text{w.p. } 1-2\delta\nonumber\\ 
  (\text{Lemma~\ref{lem: utility bound from learning error}})\implies&\bbE[u_i(\Xvec, \Xvec, \lambda)]-\bbE[u_i(\Xvec, \Xvec, \lambda^*)] \leq n\Delta_{T'} \bar x\nonumber
  \end{align}
\end{proof}

\subsection{Proof of Lemma~\ref{lem: single strategic agent utility bound}}
\begin{proof}
  Let $F_t, t=1,\ldots, T$ be the distributions that agent $i$ reports from in
  each round given the history, i.e. $\tilde X_{i,t} | \cH_t\sim F_t$.  First we try to bound the
  utility that the strategic agent can get from a single epoch.  Fix an epoch
  $k$. Suppose $T'$ is the time when either detection algorithm is triggered, or
  the first time some receiver hits his allocation budget of $p^*_jT$. Let $\tau
  = \min(T', L_{k+1}-1)$. We now define three distributions:
  \begin{align*}
  \bar F^1 =& \frac{1}{L_k-1}\sum_{t=1}^{L_k-1} F_t\\
  \bar F^2 =& \frac{1}{\tau-1}\sum_{t=1}^{\tau-1} F_t\\
  \bar F^3 =& \frac{1}{\tau-L_k}\sum_{t=L_k}^{\tau-1} F_t
  \end{align*}
  These are the average distributions that agent $i$ reported from, averaged
  across three time periods: $[1,L_k)$, $[1, \tau)$ and $[L_k,\tau)$. In particular, $\bar F^3$ is the average distribution that the
  strategic agent reports from in epoch $k$.
  From Lemma~\ref{lem: reported distributions have to be close to true distribution}
  we know that with probability $1-2\delta$:
  \begin{align*}
  \sup_x|\bar F^1(x) - F(x)| \leq&64\sqrt{\frac{\log(\frac{256e(L_k-1)}{\delta})}{n(L_k-1)}}\\
  \sup_x|\bar F^2(x) - F(x)| \leq&64\sqrt{\frac{\log(\frac{256e(\tau-1)}{\delta})}{n(\tau-1)}}
  \end{align*}
  which together means that 
  \begin{align*}
    &\sup_x|\bar F^2(x) - F(x)| = \sup_x|\frac{L_k}{\tau}(\bar F^1(x) - F(x)) + \frac{\tau-L_k}{\tau} (\bar F^3(x) - F(x))| \\
    \implies &\sup_x|\bar F^2(x) - F(x)| \geq  \sup_x|\frac{\tau-L_k}{\tau} (\bar F^3(x) - F(x))| - \sup_x|\frac{L_k}{\tau}(\bar F^1(x) - F(x))|\\
    \implies &\sup_x|\bar F^3(x) - F(x)| \leq  \bar\Delta_k \coloneqq \min\left(\frac{128\tau}{\tau-L_k}\sqrt{\frac{\log(\frac{256e(\tau-1)}{\delta})}{n(\tau-1)}}, 1\right)
  \end{align*}
  Note that the last step also uses the fact that the difference between two
  CDFs cannot be bigger than 1.  
  Let $r$ be any joint distribution for agent $i$'s reported  and true
  valuation $(\tilde x, x)$ such that the marginal for the reported valuation is equal to $\bar F^3$, i.e., 
  $$\bar X_{i,t}\sim r(\cdot|X_{i,t}), X_{i,t}\sim F\implies F_r(x)\coloneqq \bbP(\bar X_{i,t}\leq x)= \bar F^3$$ 
  Let $\bar \Xvec$ denote the reported value vector when $i$ is the only strategic agent and uses $r(\cdot|X_i)$ to pick his reported value: $ \bar X_j= X_j \,\forall j\neq
  i$, $\bar  X_i \sim r( \cdot | X_i)$.  Let $\Delta_{L_k-1} =
  81\sqrt{\frac{\log(\frac{256e(L_k-1)}{\delta})}{n(L_k-1)}}$. Using this, we
  have
  \begin{align}
  (\text{Lemma~\ref{lem: learning under strategic report}})\implies&\bbE[u_i(\Xvec, \Xvec, \lambda)]-\bbE[u_i(\Xvec, \Xvec, \lambda^*)] \leq n\Delta_{L_k-1} \bar x \nonumber\\
  (\text{Lemma~\ref{lem: utility gain bound given fixed allocation policy}})\implies&\bbE[u_i(\bar \Xvec, \Xvec, \lambda)]-\bbE[u_i(\Xvec, \Xvec, \lambda^*)] \leq n\Delta_{L_k-1} \bar x + \bar\Delta_{k}\bar x\nonumber\\ 
  (\text{Corollary~\ref{cor: martingale uniform convergence}})\implies&\sum_{t=L_{k}}^{\tau-1}u_i(\tilde \Xvec_t, \Xvec_t, \tilde \lambda_{k_t}) - (\tau-L_k)\bbE[u_i(\Xvec, \Xvec, \lambda^*)]\nonumber\\
                                 &\leq (n\Delta_{L_k-1} +\bar\Delta_k)\bar x (\tau-L_k) 
                                 + 16\sqrt{(\tau-L_k)\log(\frac{128e(\tau-L_k)}{\delta})}\bar x &\text{w.p. } 1-\delta\nonumber\\
  &\leq 81\sqrt{\frac{n(\tau-L_k)^2}{2(L_k-1)} \log(\frac{256eL_k}{\delta})} \bar x + 144\sqrt{2\tau\log(\frac{256e\tau}{\delta})} \bar x  &\text{w.p. } 1-\delta \label{eq: individual gain helper}
  \end{align}
  The above is a high probability bound on how much an agent can get in one epoch. We can now bound the strategic agent's utility over the full horizon.
  \begin{align*}
      &\sum_{t=1}^{T'} u_i(\tilde \Xvec_t, \Xvec_t, \tilde \lambda_{k_t}) - T'\bbE\left[u_i(\Xvec, \Xvec, \lambda^*)\right] \\
      \leq & \sum_{k=0}^{\log_2 T'-1} \left[ \sum_{t=L_{k}}^{L_{k+1}-1}u_i(\tilde \Xvec_t, \Xvec_t, \lambda) - (L_{k+1} - L_k)\bbE[u_i(\Xvec, \Xvec, \lambda^*)] \right]\\
      (\text{Using } \eqref{eq: individual gain helper})\leq & \bar x (L_1-1) + \sum_{k=0}^{\log_2 T'-1} \left(81\sqrt{\frac{n(L_{k+1}-L_k)^2}{2(L_k-1)} \log(\frac{256e(L_k-1)}{\delta})} \bar x \right.\\
      &+ \left.  144\sqrt{2L_{k+1}\log(\frac{256eL_{k+1}}{\delta})} \bar x   \right)&\text{w.p } 1-\delta\log_2T\\
      (L_k = 2^k)\leq & \bar x + \sum_{k=0}^{\log_2 T'-1} 285\sqrt{n2^{k} \log(\frac{256eT'}{\delta})} \bar x &\text{w.p } 1-\delta\log_2T\\
      \leq & \left(\frac{285\sqrt{2}}{\sqrt{2}-1}\sqrt{nT' \log(\frac{256e}{\delta})} + 1\right)\bar x &\text{w.p } 1-\delta\log_2T
  \end{align*}

  The result follows by replacing the original $\delta$ with
  $\frac{\delta}{\log_2T}$.

\end{proof}

\section{Auxiliary Proofs}
\subsection{Proof of Claim~\ref{claim: generalized monotone mapping between two distributions}}
\label{sec: proof of claim generalized monotone mapping between two distributions}
\monotonecoupling*
\begin{proof}
  We first prove existence by constructing a joint distribution with the desired marginals and monotonicity, 
  then we show uniqueness.
  \paragraph{Existence.} We will construct the joint distribution by defining the conditional distribution of $X$ given $Y=y$ for every $y$. 
  Note that if $F$ is a continuous distribution, then we can easily construct $r(\cdot|Y=y)$ using the inverse-CDF method:
  $$
  X|y = G^{-1}(F(y)) 
  $$
  where $G^{-1}\coloneqq \inf\{x\in \bbR : G(x)\geq p\}$ is the generalized inverse. This works because $F(Y)\sim$~Uniform[0,1]. 
  If $F$ contains point masses, then $F(Y)$ is no longer uniformly distributed, and the inverse-CDF method does not work. To resolve this, we construct a different random variable $F^u(y)$ for each value $y$.
  For a given sample $y$, If $F(y)\neq F(y_-)$, let $F^u(y) \sim \text{Uniform}[F(y_-), F(y)]$. Otherwise, let $F^u(y) = F(y)$. Now we let
  $$X|y= G^{-1}(F^u(y))$$
  To see that $X$ sampled using this process has the marginal distribution $G$, we just need to show that $F^u(Y)$ is uniformly distributed. For a given $p$, if $\exists y \,s.t.\, F(y) = p$, then $\bbP(F^u(Y)\leq p) = \bbP(F(Y)\leq p) = \bbP(Y\leq y) = p$. Otherwise that means $ \exists y \,s.t.\, p_1\coloneqq F(y_-)\leq p\text{ and }p_2\coloneqq F(y)>p$.
  \begin{align*}
        &\bbP(F^u(Y)\leq p) \\
      = &\bbP(Y< y) + \bbP(F^u(y) \leq p| Y = y)\bbP(Y = y)\\
      = & p_1 + \frac{p-p_1}{p_2-p_1}(p_2-p_1)\\
      = &p
  \end{align*}
  This construction also satisfies monotonicity, since if $y_1<y_2$, then $F^u(y_1)\leq F(y_1)$ w.p.1. and $F^u(y_2) \geq F(y_1)$ w.p.1. 
  \paragraph{Uniqueness}
  Now we show uniqueness. 
  For a given $(x, y)$ pair, 
  suppose $x< \upperx_r(y)$. Then from monotonicity we know  $\lowerx_r(y') \geq \upperx_r(y)>x$ for all $y'>y$, which implies that  
  $$ \bbP_r(X\leq x, Y\leq y)  = G(x).$$

  If $x \geq \upperx_r(y)$, then from monotonicity we know $\upperx_r(y') \leq \upperx_r(y)\leq x $  for all $y'<y$, which implies that 
  $$
  \bbP_r(X\leq x, Y\leq y) = F(y)
  $$
  
  Since $G$ and $F$ are fixed, we have shown that all joint distributions $r$ with monotonicity and the required marginals are the same. 
  \end{proof}
\end{document}